\newcommand{\qedsymb}{\qed}
\newenvironment{proofof}[1]{\begin{trivlist}
		\item[\hspace{\labelsep}{\bf\noindent Proof of #1: }]
	}{\qedsymb\end{trivlist}}
\newtheorem{theorem}{Theorem}
\newtheorem{lemma}{Lemma}
\newtheorem{definition}{Definition}
\newtheorem{claim}{Claim}
\newtheorem{corollary}{Corollary}
\newcommand{\clique}{\textsc{Congested Clique}\xspace}
\newcommand\ceil[1]{\lceil#1\rceil}
\newcommand{\alg}{\text{SMM}\xspace}
\newcommand{\algnice}{\text{SBMM}\xspace}
\newcommand{\algASBMMSubroutineOne}{ExchangeInfo\xspace}
\newcommand{\argmin}{argmin}
\newcommand{\nice}{sparsity-balanced}
\newcommand{\SIndecies}[1]{(#1-1)(n/a)+1:#1(n/a)}
\newcommand{\TIndecies}[1]{(#1-1)(n/b)+1:#1(n/b)}
\providecommand{\ceil}[1]{\left \lceil #1 \right \rceil }
\newcommand{\algrule}[1][.2pt]{\par\vskip.5\baselineskip\hrule height #1\par\vskip.5\baselineskip}
\let\oldnl\nl
\newcommand{\nonl}{\renewcommand{\nl}{\let\nl\oldnl}}
\newenvironment{theorem-repeat}[1]{\begin{trivlist}
		\item[\hspace{\labelsep}{\bf\noindent Theorem~\ref{#1} (repeated) }]\it}%
	{\end{trivlist}}
\begin{document}
	
\begin{titlepage}
	
\title{Sparse Matrix Multiplication and Triangle Listing\\ in the Congested Clique Model\thanks{A preliminary version of this paper appeared in OPODIS 2018.}}

\author{Keren Censor-Hillel\footnote{Department of Computer Science, Technion. Email: ckeren@cs.technion.ac.il.}
\and Dean Leitersdorf\footnote{Department of Computer Science, Technion. Email: dean.leitersdorf@gmail.com.}
\and Elia Turner\footnote{Department of Computer Science, Technion. Email: eliaturner11@gmail.com.}}
		
			\maketitle

\begin{abstract}
			We show how to multiply two $n \times n$ matrices $S$ and $T$ over semirings in the \textsc{Congested Clique} model, where $n$ nodes communicate in a fully connected synchronous network using $O(\log{n})$-bit messages, within $O(nz(S)^{1/3} nz(T)^{1/3}/n + 1)$ rounds of communication, where $nz(S)$ and $nz(T)$ denote the number of non-zero elements in $S$ and $T$, respectively. By leveraging the sparsity of the input matrices, our algorithm greatly reduces communication costs compared with general multiplication algorithms [Censor-Hillel et al., PODC 2015], and thus improves upon the state-of-the-art for matrices with $o(n^2)$ non-zero elements. Moreover, our algorithm exhibits the additional strength of surpassing previous solutions also in the case where only one of the two matrices is such. Particularly, this allows to efficiently raise a sparse matrix to a power greater than 2. As applications, we show how to speed up the computation on non-dense graphs of $4$-cycle counting and all-pairs-shortest-paths.
			
			Our algorithmic contribution is a new \emph{deterministic} method of restructuring the input matrices in a sparsity-aware manner, which assigns each node with element-wise multiplication tasks that are not necessarily consecutive but guarantee a balanced element distribution, providing for communication-efficient multiplication.
			
			Moreover, this new deterministic method for restructuring matrices may be used to restructure the adjacency matrix of input graphs, enabling faster deterministic solutions for graph related problems. As an example, we present a new sparsity aware, \emph{deterministic} algorithm which solves the triangle listing problem in $O(m/n^{5/3} + 1)$ rounds, a complexity that was previously obtained by a \emph{randomized} algorithm [Pandurangan et al., SPAA 2018], and that matches the known lower bound of $\tilde{\Omega}(n^{1/3})$ when $m=n^2$ of [Izumi and Le Gall, PODC 2017, Pandurangan et al., SPAA 2018].
			Naturally, our triangle listing algorithm also implies triangle counting within the same complexity of $O(m/n^{5/3} + 1)$ rounds, which is (possibly more than) a \emph{cubic} improvement over the previously known \emph{deterministic} $O(m^2/n^3)$-round algorithm [Dolev et al., DISC 2012].
\end{abstract}

\thispagestyle{empty}
\end{titlepage}

	\section{Introduction}	
	\label{section:intro}
	Matrix multiplication is a fundamental algebraic task, with abundant applications to various computations. The value of the exponent $\omega$ of matrix multiplication, that is, the value $\omega$ for which $\Theta(n^{\omega})$ is the complexity of matrix multiplication, is a central question in algebraic algorithms~\cite{Strassen69,CoppersmithW90,Williams12}, and is currently known to be bounded by $2.3728639$ \cite{LeGall14a}.
	
	The work of Censor-Hillel et al.~\cite{Censor-HillelKK15} recently showed that known matrix multiplication algorithms for the \emph{parallel} setting can be adapted to the distributed \clique model, which consists of $n$ nodes in a fully connected synchronous network, limited by a bandwidth of $O(\log{n})$ bits per message. Subsequently, this significantly improved the state-of-the-art for a variety of tasks, including triangle and $4$-cycle counting, girth computations, and (un)weighted/(un)directed all-pairs-shortest-paths (APSP). This was followed by the beautiful work of Le Gall~\cite{LeGall16}, who showed how to efficiently multiply rectangular matrices, as well as multiple independent multiplication instances. These led to even faster algorithms for some of the tasks, such as weighted or directed APSP, as well as fast algorithms for new tasks, such as computing the size of the maximum matching.
	
	In many cases, multiplication is required to be carried out for \emph{sparse} matrices, and this need has been generating much effort in designing algorithms that are faster given sparse inputs, both in sequential (e.g.,~\cite{YusterZ05,LeGall12,KaplanSV06,AmossenP09,LeGallU18}) and parallel (e.g., ~\cite{BallardBDGLST13,BulucG11,BulucG12,AzadBBDGSTW16,BallardDKS16,LazzaroVHS17,KoanantakoolABM16,SolomonikBVH17}) settings.
	
	In this paper we focus our attention on the task of multiplying sparse matrices in the \clique model, providing a novel \emph{deterministic} algorithm with a round complexity which depends on the sparsity of the input matrices.
	
		An immediate application of our algorithm is faster counting of $4$-cycles.
	Moreover, a prime feature of our algorithm is that it speeds up matrix multiplication even if \emph{only one} of the input matrices is sparse. The significance of this ability stems from the fact that the product of sparse matrices may be non-sparse, which in general may stand in the way of fast multiplication of more than two sparse matrices, such as raising a sparse matrix to a power that is larger than 2. Therefore, this property of our algorithm enables, for instance, a fast algorithm for computing APSP in the \clique model.
	We emphasize that, unlike the matrix multiplication algorithms of~\cite{Censor-HillelKK15}, we are not aware of a similar sparse matrix multiplication algorithm existing in the literature of parallel settings.
	
	Furthermore, we leverage our techniques to obtain a deterministic algorithm for sparsity-aware triangle listing in the \clique model, in which each triangle needs to be known to some node. This problem has been tackled (implicitly) in the \clique model for the first time by Dolev et al. \cite{DolevLP12}, providing two deterministic algorithms. Later,~\cite{Pandurangan0S16,IzumiG17} showed a $\tilde{\Omega}(n^{1/3})$ lower bound in general graphs. Pandurangan et al.~\cite{Pandurangan0S16} showed a \emph{randomized} triangle listing algorithm, with the same round complexity as we obtain.
	
	\subsection{Our contribution}
	For a matrix $A$, let $nz(A)$ be its number of nonzero elements.
	Our main contribution is an algorithm called \alg (Sparse Matrix Multiplication), for which we prove the following.
	
	\newcommand{\TheoremMain}
	{
		\sloppy{
			Given two $n \times n$ matrices $S$ and $T$, Algorithm \alg ~deterministically computes the product $P = S \cdot T$ over a semiring in the \clique model, completing in $O(nz(S)^{1/3} nz(T)^{1/3}/n + 1)$ rounds.
		}
	}
	
	\begin{theorem}
		\label{theorem:theorem1}
		\TheoremMain\footnote{Since we minimize communication rather than element-wise multiplications, the \emph{zero element} does not have to be the zero element of the semiring - any \emph{single} element may be chosen to not be explicitly communicated.}
	\end{theorem}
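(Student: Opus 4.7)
The plan is to adapt the 3D decomposition underlying the dense matrix-multiplication algorithm of Censor-Hillel et al., but to choose the stripe boundaries in a \emph{sparsity-aware} fashion so that no node is overloaded by hot regions of $S$ or $T$. Introduce three parameters $a,b,c$ with $abc \le n$ to be tuned at the end. Partition the rows of $S$ into $a$ stripes $S^{(1)},\ldots,S^{(a)}$ whose boundaries come from the prefix sums of per-row nonzero counts, so that each stripe carries roughly $nz(S)/a$ nonzeros; symmetrically, partition the columns of $T$ into $b$ stripes of roughly $nz(T)/b$ nonzeros each. Block $(i,j)$ of $P$ equals $S^{(i)} \cdot T^{(j)}$. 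Introduce a third level by splitting the shared middle dimension into $c$ slabs, giving partial products $P_{i,j}^{k}$ whose sum over $k$ reconstructs $P_{i,j}$. Assign the $abc$ triples $(i,j,k)$ round-robin to the $n$ nodes, so each handles $O(\max(abc/n,1))$ triples.

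For each assigned triple $(i,j,k)$, a node must collect the nonzeros of $S^{(i)}$ restricted to the $k$-th column slab together with those of $T^{(j)}$ restricted to the $k$-th row slab. A nonzero of $S$ is needed by the $b$ nodes that share its $(i,k)$-block, giving total routed volume $O(b \cdot nz(S))$ words for $S$ and symmetrically $O(a \cdot nz(T))$ for $T$. Provided that the balanced partition keeps every node's per-round send and receive loads within the $O(n)$ cap, Lenzen's routing delivers everything in $O((b \cdot nz(S) + a \cdot nz(T))/n^2)$ rounds, and the subsequent aggregation of the $c$ partial products into each entry of $P$ takes $O(c)$ more rounds. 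Balancing all three terms subject to $abc = n$ fixes $a = n \cdot nz(S)^{1/3} nz(T)^{-2/3}$, $b = n \cdot nz(S)^{-2/3} nz(T)^{1/3}$, $c = nz(S)^{1/3} nz(T)^{1/3}/n$, yielding the claimed $O(nz(S)^{1/3} nz(T)^{1/3}/n)$ bound. When the formula forces $c < 1$ (that is, when $nz(S) \cdot nz(T) < n^3$), clamp $c = 1$ and reduce to a 2D decomposition that completes in $O(1)$ additional rounds, which accounts for the additive $+1$.

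The main obstacle I anticipate is realizing this balanced partition and the associated routing fully \emph{deterministically} while respecting the $O(n)$ per-node send/receive cap in every single round. Balancing the row stripes of $S$ by itself is easy via prefix sums, but \emph{simultaneously} balancing the column slabs so that every individual $(i,k)$-block has close to $nz(S)/(ac)$ nonzeros is subtler, since the row and column balancings interact and a purely geometric split may leave some blocks arbitrarily dense. This is the content of the ``not necessarily consecutive'' task assignment flagged in the introduction: to equalize the load, a node's chunk of $S^{(i)}$ may have to consist of non-contiguous column segments, and the algorithm must establish a consistent global indexing of these fragments before any routing begins. Once such a deterministic balanced assignment is in place, correctness is immediate from $P_{i,j} = \sum_{k} P_{i,j}^{k}$, and the claimed round complexity follows from the communication analysis above.
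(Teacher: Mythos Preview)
Your high-level plan matches the paper's: a 3D decomposition with parameters $a=n\cdot nz(S)^{1/3}/nz(T)^{2/3}$, $b=n\cdot nz(T)^{1/3}/nz(S)^{2/3}$, and $n/ab$ middle-dimension pieces per $(i,j)$-block, together with the total-volume count $O(b\cdot nz(S)+a\cdot nz(T))$ for input delivery and $O(n/ab)$ for aggregation. That part is exactly right.

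There is, however, a concrete flaw in one step and an unfilled gap in another.

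\textbf{The flaw: variable-width stripes break aggregation.} Choosing stripe boundaries by prefix sums of row-nonzero counts balances $nz(S^{(i)})$, but it does \emph{not} balance the number of rows $r_i$ in each stripe. A stripe consisting of many rows with one nonzero each still produces $r_i$ potentially dense rows of $P$, so the node holding triple $(i,j,k)$ must ship a partial product $P_{i,j}^k$ of size $r_i\times c_j$, which can be $\omega(n^2/ab)$. Your ``$O(c)$ for aggregation'' implicitly assumes all $P_{i,j}^k$ have the same dimensions, which your partition does not guarantee. The paper avoids this by \emph{permuting} the rows of $S$ (and columns of $T$) so that \emph{equal-width} stripes of $n/a$ rows each carry $\le nz(S)/a+n$ nonzeros; the permutation is the sorted-interleave of Claim~\ref{combi:partitioning}. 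Then every $P_{i,j}^k$ is exactly $n/a\times n/b$, and Lemma~\ref{lemma:afterExchange} gives the $O(n/ab)$ aggregation bound cleanly. Replacing ``prefix-sum boundaries'' by ``permute, then cut into equal pieces'' fixes this.

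\textbf{The gap: the obstacle you name is the whole theorem.} You correctly flag that balancing the $(i,k)$-blocks and meeting the per-node send/receive caps is the hard part, but you stop there. The paper's proof is precisely a resolution of this, and it has three interlocking pieces that your plan does not yet contain. First (\emph{Receiving}): the middle dimension is not split into $c$ global slabs; rather, for each $(i,j)$ separately, the $n$ pages $\ell$ are assigned to the $n/ab$ nodes of $N_{i,j}$ by sorting on the per-page cost $w(P_{i,j,\ell})=nz(S_i[*][\ell])+nz(T_j[\ell][*])$ and interleaving (again Claim~\ref{combi:partitioning}). Second (\emph{Sending}): since node $v$ initially holds row $v$, which may be arbitrarily dense, the entries of $S$ and $T$ are first redistributed so that every node holds at most $nz(S)/n+1$ entries of $S$ drawn from at most two columns (and symmetrically for $T$); this is what bounds the per-node send load. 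Third (\emph{Routing}): because the page assignment inside $N_{i,j}$ is decided locally and is unknown to the senders, the protocol has each receiver first send its page indices $\ell$ to the (globally known) holders of the relevant subsequences, who then reply with the nonzeros. Without these three pieces the ``provided the balanced partition keeps every node's per-round send and receive loads within the $O(n)$ cap'' proviso in your middle paragraph is exactly the statement to be proved.
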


	An important case of Theorem~\ref{theorem:theorem1}, especially when squaring the adjacency matrix of a graph in order to solve graph problems, is when the sparsities of the input matrices are roughly the same.
	In such a case, Theorem~\ref{theorem:theorem1} gives the following.
	
	\begin{corollary}
		\label{cor:m-non-zero}
		Given two $n \times n$ matrices $S$ and $T$, where $O(nz(S)) = O(nz(T)) = m$,
		Algorithm \alg ~deterministically computes the product $P = S \cdot T$ over a semiring in the \clique model,
		within $O(m^{2/3}/n + 1)$ rounds.
	\end{corollary}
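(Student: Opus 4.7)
The plan is to derive this directly from Theorem~\ref{theorem:theorem1} by substitution. The round complexity stated in the theorem is $O(nz(S)^{1/3} nz(T)^{1/3}/n + 1)$, and the only thing to check is what happens when both $nz(S)$ and $nz(T)$ are asymptotically equal to $m$.

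I would begin the proof by invoking Theorem~\ref{theorem:theorem1}, which guarantees that Algorithm \alg{} deterministically computes $P = S \cdot T$ over a semiring in the \clique{} model within $O(nz(S)^{1/3} nz(T)^{1/3}/n + 1)$ rounds. The assumption $O(nz(S)) = O(nz(T)) = m$ then lets me replace each factor in the product by $O(m^{1/3})$, yielding
\[
O\!\left(\frac{nz(S)^{1/3} nz(T)^{1/3}}{n} + 1\right) \;=\; O\!\left(\frac{m^{1/3} \cdot m^{1/3}}{n} + 1\right) \;=\; O\!\left(\frac{m^{2/3}}{n} + 1\right),
\]
which is the desired bound. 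There is no real obstacle here: the content of the corollary is purely a plug-in of the sparsity hypothesis into the general bound of Theorem~\ref{theorem:theorem1}, and the determinism as well as the semiring guarantee are inherited verbatim from the theorem. The corollary is stated separately only to highlight the important symmetric-sparsity regime, which is the setting relevant to squaring adjacency matrices of graphs with $m$ edges.
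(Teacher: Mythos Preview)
Your proposal is correct and matches the paper's approach exactly: the paper presents Corollary~\ref{cor:m-non-zero} as an immediate specialization of Theorem~\ref{theorem:theorem1} to the case $O(nz(S)) = O(nz(T)) = m$, without even writing out the substitution you performed.
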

	
	Notice that for $m=O(n^2)$, Corollary~\ref{cor:m-non-zero} gives the same complexity of $O(n^{1/3})$ rounds as given by the semiring multiplication of~\cite{Censor-HillelKK15}.
	
	We apply Algorithm \alg to $4$-cycle counting, obtaining the following.
	\newcommand{\TheoremCounting}
	{
		There is a deterministic algorithm that computes the number of $4$-cycles in an $n$-node graph $G$ in $O(m^{2/3}/n + 1)$ rounds in the \clique model, where $m$ is the number of edges of $G$.
	}
	\begin{theorem}
		\label{theorem:counting}
		\TheoremCounting
	\end{theorem}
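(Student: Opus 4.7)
The plan is to reduce $4$-cycle counting to a single squaring of the adjacency matrix and then invoke Corollary~\ref{cor:m-non-zero}. Let $A$ be the adjacency matrix of $G$ and let $P = A \cdot A$. For distinct indices $i, j$, the entry $P_{ij}$ equals the number of common neighbors of $i$ and $j$. Each $4$-cycle $i\!-\!x\!-\!j\!-\!y\!-\!i$ is uniquely determined by choosing an unordered pair $\{i,j\}$ of opposite vertices together with an unordered pair $\{x,y\}$ of two distinct common neighbors of $i$ and $j$, and each $4$-cycle has exactly two such pairs of opposite vertices, yielding
\[
\#\{4\text{-cycles in }G\} \;=\; \frac{1}{4}\sum_{i \neq j} \binom{P_{ij}}{2}.
\]

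The algorithm itself has three steps. First, starting from the standard input distribution where node $v_i$ holds its row of $A$, I would run Algorithm \alg to compute $P$. Since $G$ is undirected with $m$ edges, $nz(A) = 2m$, so Corollary~\ref{cor:m-non-zero} gives a round complexity of $O(m^{2/3}/n + 1)$ and leaves each node $v_i$ in possession of its row of $P$. Second, each node $v_i$ locally computes $c_i = \sum_{j \neq i} \binom{P_{ij}}{2}$ with no communication; since $P_{ij} \leq n$, we have $c_i = O(n^3)$, which fits in a constant number of $O(\log n)$-bit messages. Third, every node sends $c_i$ to a designated node, which computes $\frac{1}{4}\sum_i c_i$ in a single round, with one further round if the answer must be known to every node.

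There is no substantial obstacle beyond the matrix multiplication itself, which is already handled by Theorem~\ref{theorem:theorem1}: the reduction is a standard double-counting argument. The only care-point is to exclude the diagonal entries $P_{ii} = \deg(i)$, which record degrees rather than cycle information; this is handled trivially inside the local summation defining $c_i$. The total round count is dominated by the matrix squaring, giving the claimed $O(m^{2/3}/n + 1)$.
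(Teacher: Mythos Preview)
Your proof is correct and follows essentially the same route as the paper: square the adjacency matrix once via Algorithm~\alg (invoking Corollary~\ref{cor:m-non-zero}) and then aggregate in $O(1)$ additional rounds. The only cosmetic difference is the combinatorial identity used for the aggregation step: the paper computes $trace(A_G^4)=trace(A_G^2\cdot A_G^2)$ and applies the Alon--Yuster--Zwick formula $\tfrac18\bigl(trace(A_G^4)-\sum_v(2d_v^2-d_v)\bigr)$, whereas you use the equivalent identity $\tfrac14\sum_{i\neq j}\binom{P_{ij}}{2}$, which has the mild advantage of being computable directly from each node's row of $A_G^2$ without any redistribution.
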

	Notice that for $m=O(n^{3/2})$ this establishes $4$-cycle counting in a constant number of rounds.
	
	As described earlier, our algorithm is fast also in the case where only one of the input matrices is sparse, as stated in the following corollary of Theorem~\ref{theorem:theorem1}.
	\begin{corollary}
		\label{cor:one-sparse}
		Given two $n \times n$ matrices $S$ and $T$, where $\min\{O(nz(S)),O(nz(T))\} = m$,
		Algorithm \alg ~deterministically computes the product $P = S \cdot T$  over a semiring in the \clique model,
		within $O((m/n)^{1/3} + 1)$ rounds.
	\end{corollary}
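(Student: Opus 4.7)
The plan is to derive the bound directly from Theorem~\ref{theorem:theorem1} by substituting the trivial upper bound $nz(M) \le n^2$, which holds for any $n \times n$ matrix $M$, into the slot corresponding to the denser of the two input matrices. Since the round complexity $O(nz(S)^{1/3} nz(T)^{1/3}/n + 1)$ of Theorem~\ref{theorem:theorem1} is symmetric in $S$ and $T$, I may assume without loss of generality that $nz(S) = O(m)$ is the smaller quantity, and use only the inequality $nz(T) = O(n^2)$ for the other matrix.

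The key computation is then a single line of algebra. Plugging $nz(S) = O(m)$ and $nz(T) = O(n^2)$ into the expression from Theorem~\ref{theorem:theorem1} yields
\[
O\!\left(\frac{nz(S)^{1/3}\, nz(T)^{1/3}}{n} + 1\right)
\;=\; O\!\left(\frac{m^{1/3}\,(n^2)^{1/3}}{n} + 1\right)
\;=\; O\!\left(\frac{m^{1/3}\, n^{2/3}}{n} + 1\right)
\;=\; O\!\left((m/n)^{1/3} + 1\right),
\]
which is precisely the claimed bound. No adjustment to Algorithm \alg{} itself is needed: the very same deterministic procedure from Theorem~\ref{theorem:theorem1}, invoked with $S$ and $T$ in either order, already achieves this complexity under the weaker hypothesis that only one of the matrices is sparse.

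There is no genuine obstacle to overcome here; the only thing to verify is that the symmetry of the bound in Theorem~\ref{theorem:theorem1} really does let us apply the sparsity hypothesis to whichever of $S$ or $T$ happens to be sparser, which is immediate from the symmetric form of the expression $nz(S)^{1/3} nz(T)^{1/3}$. The corollary is thus a one-step consequence of Theorem~\ref{theorem:theorem1} combined with the universal $n^2$ upper bound on the nonzero count of an $n \times n$ matrix.
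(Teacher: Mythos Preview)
Your proposal is correct and is exactly the intended derivation: the paper presents this as an immediate corollary of Theorem~\ref{theorem:theorem1} without a written proof, and your substitution of $nz(T)\le n^2$ for the denser matrix together with the symmetry of the bound is precisely how one obtains it.
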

	
	This allows us to compute powers that are larger than 2 of a sparse input matrix. Although we cannot enjoy the guarantees of our algorithm when repeatedly squaring a matrix, because this may require multiplying dense matrices, we can still repeatedly increase its power by 1. This gives the following for computing APSP, whose comparison to the state-of-the-art depends on trade-off between the number of edges in the graph and its diameter.
	
	\newcommand{\TheoremAPSP}
	{
		There is a deterministic algorithm that computes unweighted undirected APSP in an $n$-node graph $G$ in $O(D((m/n)^{1/3}+1))$ rounds in the \clique model, where $m$ is the number of edges of $G$ and $D$ is its diameter.
	}
	\begin{theorem}
		\label{theorem:APSP}
		\TheoremAPSP
	\end{theorem}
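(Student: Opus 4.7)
The plan is to compute unweighted APSP by iterated multiplication of the adjacency matrix over the min-plus (tropical) semiring, using Corollary~\ref{cor:one-sparse} to keep each multiplication cheap by always placing a fixed sparse factor on the right.

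Let $B$ be the $n\times n$ matrix with $B_{uu}=0$, $B_{uv}=1$ if $\{u,v\}\in E(G)$, and $B_{uv}=\infty$ otherwise. Over the min-plus semiring (addition is $\min$ with identity $\infty$, multiplication is $+$ with identity $0$), a routine induction gives that $B^k_{uv}$ equals the length of the shortest walk of at most $k$ steps from $u$ to $v$; the diagonal zeros act as self-loops that justify the ``at most'' phrasing. In particular, $B^D$ holds all pairwise distances, and $B^k=B^D$ for every $k\ge D$.

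I would then compute $B^2, B^3, \ldots$ one step at a time via $B^{k}\leftarrow B^{k-1}\cdot B$. Although $B^{k-1}$ may become dense well before $k$ reaches $D$, the right factor $B$ always has only $O(m+n)$ finite entries. Choosing $\infty$ as the distinguished element that is not communicated (as the footnote to Theorem~\ref{theorem:theorem1} permits), we have $\min\{nz(B^{k-1}),nz(B)\}=O(m+n)$, so Corollary~\ref{cor:one-sparse} executes each multiplication deterministically in $O((m/n)^{1/3}+1)$ rounds. To avoid needing advance knowledge of $D$, after each iteration every node compares its new row to its previous one and broadcasts a single change-bit; since entries of $B^k$ are monotonically non-increasing in $k$, the global no-change signal first arises exactly at $k=D$, adding only $O(1)$ rounds per iteration. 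Summing over the $D-1$ multiplications yields the claimed $O(D((m/n)^{1/3}+1))$-round bound, with each node $v$ finishing in possession of row $v$ of $B^D$.

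The main obstacle, and the reason a naive repeated-squaring strategy fails, is that $B^2$ itself may already be dense, so after a single squaring Corollary~\ref{cor:m-non-zero} ceases to provide any sparsity-aware savings. The essential point is to exploit the \emph{one-sided} sparsity guarantee of Corollary~\ref{cor:one-sparse}: by always multiplying on the right by the permanently sparse matrix $B$, we keep the per-step cost at $O((m/n)^{1/3}+1)$ throughout the entire computation, trading the $O(\log D)$ factor of repeated squaring for the $O(D)$ factor appearing in the stated bound.
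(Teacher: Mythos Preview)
Your proof is correct and follows essentially the same approach as the paper: iteratively multiply by the permanently sparse matrix over the min-plus semiring, invoking Corollary~\ref{cor:one-sparse} each time, and observe that repeated squaring is unavailable because intermediate powers densify. The only difference is in how the stopping time is determined: the paper first runs a single-source BFS to obtain a $2$-approximation $\tilde{D}$ of $D$ and then performs $2\tilde{D}$ multiplications, whereas you detect the fixed point directly by broadcasting a one-bit ``changed'' flag after each iteration; both mechanisms cost $O(D)$ iterations plus $O(1)$ overhead per iteration, so the resulting bound is identical.
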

	For comparison, the previously known best complexity of unweighted undirected APSP is $O(n^{1-2/\omega})$, given by\cite{Censor-HillelKK15,LeGall16}, which is currently known to be bounded by $O(n^{0.158})$. For a graph with a number of edges that is $m = o(n^{4-6/\omega}/D^3)$, which is currently $o(n^{1.474}/D^3)$, our algorithm improves upon the latter.
	
		Lastly, we leverage the routing techniques developed in our sparse matrix multiplication algorithm in order to introduce an algorithm for the triangle listing problem in the \clique model.
	\newcommand{\TheoremListing}
	{
		There is a deterministic algorithm for triangle listing in an $n$-node, $m$-edge graph $G$  in $O(m/n^{5/3}+1)$ rounds in the \clique model.
	}
	\begin{theorem}
		\label{theorem:triangleListing}
		\TheoremListing
	\end{theorem}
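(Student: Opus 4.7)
The plan is to adapt the classical triple-of-groups scheme of \cite{DolevLP12} by replacing its uniform vertex partition with the \nice restructuring developed for Algorithm \alg, so that routing costs scale with $m$ rather than with $n^2$. Concretely, I would partition $V$ into $p = \Theta(n^{1/3})$ groups $V_1,\dots,V_p$ using the deterministic restructuring underlying Theorem~\ref{theorem:theorem1}, injectively map the $\binom{p}{3} = O(n)$ unordered triples of distinct group indices into clique nodes $v_{ijk}$, and then deliver to each $v_{ijk}$ the entire induced subgraph $G[V_i \cup V_j \cup V_k]$. After routing, each $v_{ijk}$ locally enumerates the triangles in its subgraph, and a canonical convention (each triangle is listed only by the lex-smallest triple of groups covering its three vertices) ensures that every triangle of $G$ is listed by exactly one node.

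The key quantitative step is the edge routing. An edge $(u,v)$ with $u \in V_i$ and $v \in V_j$ must reach the $p-2 = O(n^{1/3})$ triples that contain both $i$ and $j$, so the total number of edge-messages is $O(m \cdot n^{1/3})$. I would carry this out in two phases. In the \emph{collection} phase, each edge is first sent from one of its endpoints to an ``owner'' clique node prescribed by the \nice assignment; by the balance of that assignment, every node both sends and receives $O(m/n)$ edges in this phase. In the \emph{multicast} phase, each owner forwards its $O(m/n)$ edges to the $O(n^{1/3})$ triples that need them, inducing a per-owner sending load of $O(m/n \cdot n^{1/3}) = O(m/n^{2/3})$ and, by the same balance, a per-triple receiving load of $O(m/n^{2/3})$. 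Since each clique node can exchange $\Theta(n)$ messages per round, under these per-node load caps standard \clique routing completes both phases in $O(m/n^{5/3}+1)$ rounds.

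The main obstacle I anticipate is verifying the per-triple receive bound, since what Algorithm \alg directly guarantees is a balanced distribution of non-zeros across row- and column-blocks of the restructured adjacency matrix, not an \emph{a~priori} balance of the pair counts $|E(V_i,V_j)|$ for every $(i,j)$. I would address this either (i)~by strengthening the analysis of the restructuring to show that row/column balance already forces $|E(V_i,V_j)| = O(m/n^{2/3})$ up to constants, or (ii)~by inserting a deterministic permutation of group indices before the multicast, constructed via the same combinatorial potential used inside \alg, so that every triple sees $O(m/n^{2/3})$ edges. With the receive bound in place, the routing analysis above together with local triangle enumeration at each $v_{ijk}$ yields the claimed $O(m/n^{5/3}+1)$ round complexity, which matches the $\tilde\Omega(n^{1/3})$ lower bound at $m = n^2$.
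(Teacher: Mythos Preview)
Your plan has a genuine gap at exactly the point you flag as the ``main obstacle'': the per-triple receive bound. Neither of your proposed fixes can work, and the paper's algorithm is built precisely to circumvent this.

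Fix (i) fails because row/column balance on the restructured adjacency matrix controls only $\sum_j |E(V_i,V_j)| = O(m/n^{1/3})$ for each $i$, and this says nothing about individual pair counts. A single pair can absorb the entire row budget: for instance, take any $d$-regular graph with $d=2m/n$. Since all degrees are equal, the degree-based restructuring of Algorithm~\alg imposes no constraint on how vertices are grouped, so an adversary can place a clique on $d+1\le n^{2/3}$ vertices entirely inside one $V_i$, giving $|E(V_i,V_i)|=\Theta(m^2/n^2)$, which exceeds $m/n^{2/3}$ whenever $m\gg n^{4/3}$. More directly, one can have $|E(V_1,V_2)|=\Theta(m/n^{1/3})$ while all row and column sums stay at $O(m/n^{1/3})$. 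The node handling the triple $(V_1,V_2,\cdot)$ then needs $\Theta(m/n^{1/3})$ edges, forcing $\Theta(m/n^{4/3})$ rounds rather than $O(m/n^{5/3})$. The paper's own overview of the simple triple scheme says exactly this: the load across triplets ``may vary drastically''.

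Fix (ii) is a non-starter: permuting group indices just relabels the groups and leaves every $|E(V_i,V_j)|$ unchanged, so some triple still has the heavy pair.

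What the paper actually does is abandon the one-node-per-triple assignment. After the degree-balanced partition $\{V_1,\dots,V_{n^{1/3}}\}$, it computes the pair counts $m_{i,j}=|E(V_i,V_j)|$ and \emph{adaptively subdivides} $V_i$ into $\lceil m_{i,j}/\beta\rceil$ pieces $N_{i,j,\ell}$ with $|E(N_{i,j,\ell},V_j)|\le\beta=m/n^{2/3}+n$; the total number of such pieces over all $(i,j)$ is $O(n^{2/3})$, so each piece can be assigned to a distinct block $D_k$ of $n^{1/3}$ nodes. Then, \emph{within} each $D_k$, a second data-dependent partition $\{P_1,\dots,P_{n^{1/3}}\}$ of $V$ is chosen so that $|E(V_j,P_d)|+|E(P_d,V_i)|\le 4\beta$ for every $d$, and one node handles each $P_d$. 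This two-level adaptive refinement is what forces every node's receive load down to $O(\beta)$; your single static partition cannot achieve that.
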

	
	For comparison, two deterministic algorithms by Dolev et al.~\cite{DolevLP12} take $\tilde{O}(n^{1/3})$ and $O(\ceil{\Delta^2/n})$ rounds, while the sparsity-aware randomized algorithm of Pandurangan et al.~\cite{Pandurangan0S16} completes in $\tilde{O}(m/n^{5/3})$, w.h.p. Notice that for general graphs, our algorithm matches the lower bound of $\tilde{\Omega}(n^{1/3})$ by\cite{Pandurangan0S16,IzumiG17}. Additionally, our algorithm for triangle listing implies a triangle counting algorithm. A triangle counting algorithm whose complexity depends on the arboricity $A$ of the graph is given in~\cite{DolevLP12}. Their algorithm completes in $O(A^2 / n + \log_{2+n/A^2} {n})$ rounds. Since $A \geq m/n$, this gives a complexity of $\Omega(m^2/n^3)$, upon which our algorithm provides more than a cubic improvement. The previously known best complexity of triangle and $4$-cycle counting in general graphs is $O(n^{1-2/\omega})$, given by\cite{Censor-HillelKK15}, which is currently known to be bounded by $O(n^{0.158})$. For a graph with a number of edges that is $m = o(n^{8/3-2/\omega})$, which is currently $o(n^{1.824})$, our algorithm improves upon the latter.


\textbf{Roadmap:} The remainder of this section contains an intuitive discussion of the challenges and how we overcome them, followed by a survey of related work and the required preliminaries. Section~\ref{section:SparseMM} gives our sparse matrix multiplication algorithm, and Section~\ref{sec:applications} shows its immediate applications in APSP and 4-cycle counting. Section~\ref{sec:triangle-listing} provides our triangle listing algorithm. We conclude with a discussion in Section~\ref{sec:discussion}.

	\subsection{Challenges and Our Techniques}
	\label{subsec:technique}
	Given two $n\times n$ matrices $S$ and $T$, denote their product by $P=S\cdot T$, for which $P[i][j]=\sum_{k=1}^n S[i][k]T[k][j]$. A common way of illustrating the multiplication is by a \emph{3-dimensional cube} of size $n\times n \times n$, in which the entry $(i,j,k)$ corresponds to the element-wise product $S[i][k]T[k][j]$. In other words, two dimensions of the cube correspond to the matrices $S$ and $T$, and the third dimension corresponds to element-wise products. Each index of the third dimension is a \emph{page}, and $P$ corresponds to the element-wise summation of all $n$ pages.
	
	In essence, the task of distributed matrix multiplication is to assign each of the $n^3$ element-wise multiplications to the nodes of the network, in a way which minimizes the amount of communication that is required.\footnote{We consider all $n^3$ element-wise multiplications rather than Strassen-like algorithms since we work over a semiring and not a ring.} This motivates the goal of assigning the element-wise products to the nodes in a way that balances the number of non-zero elements in $S$ and $T$ that need to be communicated among the nodes, as this is the key ingredient towards minimizing the number of communication rounds. The main obstacle is that a sparse input matrix may be unbalanced, leading to the existence of nodes whose element-wise multiplication operation assignment requires them to obtain many nonzero elements of the input matrices that originally reside in other nodes, and thus necessitating much communication.
	
	As we elaborate upon in Section~\ref{subsec:related}, algorithms for the parallel settings, which encounter the same hurdle, typically first permute the rows and columns of the input matrices in an attempt to balance the structure of the non-zero entries. Ballard et al.~\cite{BallardBDGLST13} write: ``While a priori knowledge of sparsity structure can certainly reduce communication for many important classes of inputs, we are not aware of any algorithms that dynamically determine and efficiently exploit the structure of general input matrices. In fact, a common technique of current library implementations is to randomly permute rows and columns of the input matrices in an attempt to destroy their structure and improve computational load balance."
	
	Our high-level approach, which is \emph{deterministic}, is threefold. The first ingredient is splitting the $n \times n \times n$ cube into $n$ equally sized sub-cubes whose dimensions are determined dynamically, based on the sparsity of the input matrices. The second is indeed permuting the input matrices $S$ and $T$ into two matrices $S'$ and $T'$, respectively. We do so in a subtle manner, for which the resulting matrices exhibit some nice balancing property.\footnote{Note, we do not assume that balancing the distribution of non-zero elements gives a balanced local computation. Our balancing is done for the amount of communication: we assign the amount and identity of matrix entries that should be sent and received by each node in a way that will balance the communication, not necessarily the local computation.} The third ingredient is the innovative part of our algorithm, which assigns the computation of pages of different sub-matrices across the nodes in a \emph{non-consecutive} manner. We elaborate below about these key ingredients, with the aid of Figure~\ref{fig:cube}.
	
	\begin{figure}
		\begin{center}
			\includegraphics[scale=0.5,trim={0 7cm 0 4cm},clip]{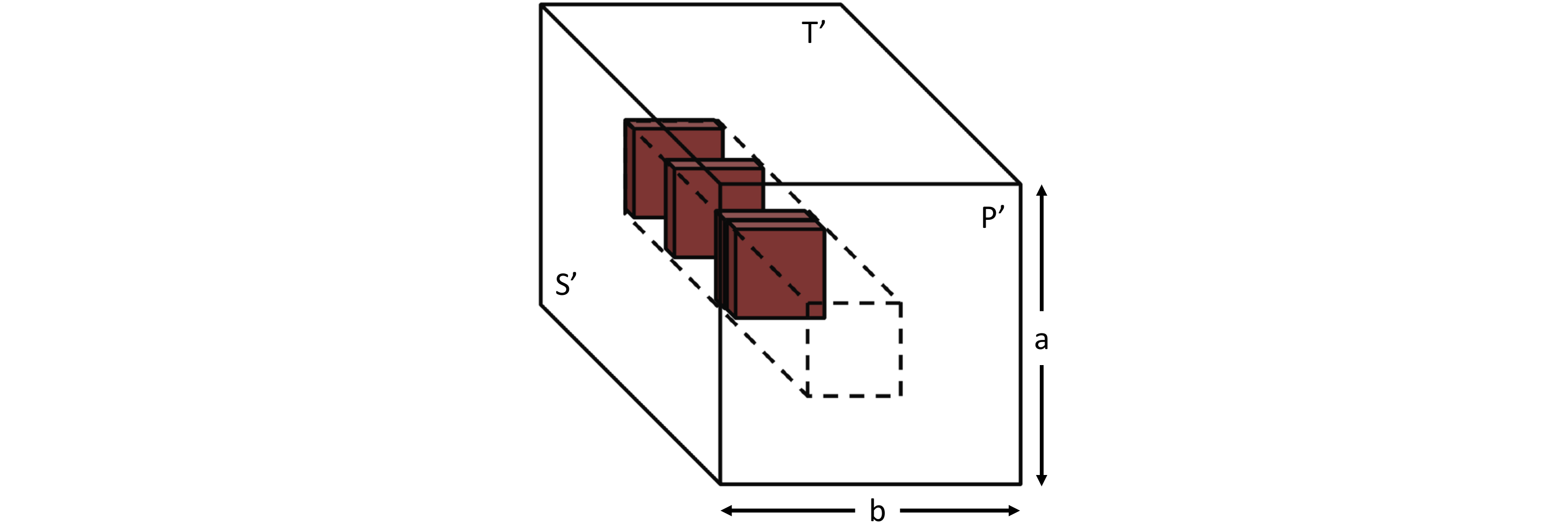}
		\end{center}
		\caption{An illustration of the multiplication cube for $P'=S'T'$. Each sub-matrix is assigned to $n/ab$ nodes, with a not necessarily consecutive page assignment that is computed on-the-fly to minimize communication.}
		\label{fig:cube}
	\end{figure}
	
	\textbf{Permuting the input matrices:}
	We employ standard parallelization of the task of computing the product matrix $P$, by partitioning $P$ into $ab$ equal sized $n/a \times n/b$ sub-matrices denoted by $P_{i,j}$ for $i \in [a], j \in [b]$, and assigning $n/ab$ nodes for computing each sub-matrix.
	
	To this end, we leverage the simple observation that the multiplication of permutations of the rows of $S$ and the columns of $T$ results in a permutation of the product of $S$ and $T$, which can be easily inverted. This observation underlies the first part of our algorithm, in which the nodes permute the input matrices, such that the number of non-zero entries from $S$ and $T$ that are required for computing each $n/a \times n/b$ sub-matrix are roughly the same across the $a \cdot b$ sub-matrices. We call the two matrices, $S'$ and $T'$, that result from the permutations, \emph{\nice~matrices with respect to $(a,b)$}. The rest of our algorithm deals with computing the product of two such matrices.
	This part inherently includes a computation of the best choice for $a$ and $b$ for minimizing the communication.
	
	
	\textbf{Assigning pages to nodes:}
	To obtain each sub-matrix $P_{i,j}$, there are $n$ sub-pages $P_{i,j,\ell}$ which need to be computed and summed. For each $P_{i,j}$, this task is assigned to distinct $n/ab$ nodes, each of which computes some of the $n$ sub-pages $P_{i,j,\ell}$ and sums them locally. The local sums are then aggregated and summed, for obtaining $P_{i,j}$. We utilize the commutativity and associativity properties of summation over the semiring in order to assign sub-pages to nodes in a \emph{non-consecutive} manner, such that the nodes require receiving a roughly equal number of non-zero entries in order to compute their assigned sub-pages.
	
	\textbf{Assigning non-zero matrix entries to nodes:}
	For fast communication in the \clique model using Lenzen's routing scheme (see Section \ref{subsec:preliminaries}), it is moreover paramount that the nodes also \emph{send} a roughly equal amount of non-zero matrix entries. However, it may be the case that a certain row, held by a node $v$, contains a significantly larger number of non-zero entries as compared with other rows. Therefore, we rearrange the entries held by each node such that every node holds a roughly equal amount of non-zero entries that need to be sent to other nodes for computing the $n^3$ products. Notice that in this step we do not rearrange the rows or columns of $S$ or $T$, rather, we redistribute the entries of $S$ and $T$. Thus, a node may hold values which originate from different rows.
	
	\textbf{Routing non-zero elements: }
	Crucially, the assignments made above, for addressing the need to balance sending and receiving, are not global knowledge. That is, for every $P_{i,j}$, the corresponding $n/ab$ nodes decide which matrix entries are received by which node, but this is unknown to the other nodes, who need to send this information. Likewise, the redistribution of entries of $S$ and $T$ across the nodes is not known to all nodes. Nonetheless, clearly, a node must know the destination of each message it needs to send. As a consequence, we ultimately face the challenge of communicating some of this local knowledge. In our solution, a node that needs to receive information from a certain column of $S$ (or row of $T$) sends a request to the nodes holding subsequences of that column (or row) without knowing the exact partition into subsequences. The nodes then deliver the non-zero entries of this column (or row), which allow computing the required element-wise multiplications.
	~\\Our solutions to the three challenges described above, for \emph{sending} and \emph{receiving} as small as possible amounts of information and for resolving a corresponding \emph{routing}, and their combination, are the main innovation of our algorithm.

	\subsection{Related work}
	\label{subsec:related}
	\textbf{Matrix multiplication in the \clique model:}
	A randomized Boolean matrix multiplication algorithm was given by Drucker et al.~\cite{DruckerKO13}, completing in $O(n^{\omega-2})$ rounds, where $\omega$ is the exponent of sequential matrix multiplication. The best currently known upper bound is $\omega < 2.3728639$ \cite{LeGall14a}, implying $O(n^{0.372})$ rounds for the above.
	Later, Censor-Hillel et al.~\cite{Censor-HillelKK15} gave a deterministic algorithm for (general) matrix multiplication over semirings, completing in $O(n^{1/3})$ rounds, and a deterministic algorithm for (general) matrix multiplication over rings, completing in $O(n^{1-2/\omega})$ rounds, which by the current known upper bound on $\omega$ is $O(n^{0.158})$. The latter is a Strassen-like algorithm, exploiting known schemes for computing the product of two matrices over a ring without directly computing all $n^3$ element-wise multiplications. Then, Le Gall~\cite{LeGall16} provided fast algorithms for multiplying rectangular matrices and algorithms for computing multiple instances of products of independent matrices.
	
	\textbf{Related graph computations in the \clique model:}
	Triangle counting in the \clique model was addressed by Dolev et al.~\cite{DolevLP12}, who provided a deterministic $\tilde{O}(n^{d-2/d})$-round algorithm for counting the number of appearances of any $d$-node subgraph, giving triangle counting in $\tilde{O}(n^{1/3})$ rounds. To speed up the computation for sparse instances, ~\cite{DolevLP12} show that every node in a graph with a maximum degree of $\Delta$ can learn its 2-hop neighborhood within $O(\Delta^2/n)$ rounds, implying the same round complexity for triangle counting. They also showed a deterministic triangle counting algorithm completing in $\tilde{O}(A^2/n+\log_{2+n/A^2}{n})$ rounds, where $A$ is the arboricity of the input graph, i.e., the minimal number of forests into which the set of edges can be decomposed. Note that a graph with arboricity $A$ has at most $An$ edges, but there are graphs with arboricity $A$ and a significantly smaller number of edges. Since it holds that $A \geq m/n$, this implies a complexity of $\Omega(m^2/n^3)$ for their triangle counting algorithm, upon which our $O(m^{2/3}/n + 1)$-round algorithm provides a cubic improvement. The deterministic matrix multiplication algorithm over rings of~\cite{Censor-HillelKK15} directly gives a triangle counting algorithm with $O(n^{1-2/\omega})$ rounds.

	For $4$-cycle counting, the algorithm of~\cite{DolevLP12} completes in $\tilde{O}(n^{1/2})$ rounds, and the matrix multiplication algorithm of~\cite{Censor-HillelKK15} implies a solution in $O(n^{1-2/\omega})$ rounds.

	For APSP, the matrix multiplication algorithms of~\cite{Censor-HillelKK15} give $O(n^{1-2/\omega})$ for the unweighted undirected case.
	For weighted directed APSP, $\tilde{O}(n^{1/3})$ rounds are given in~\cite{Censor-HillelKK15}, and
	improved algorithms for weighted (directed and undirected) APSP are given in~\cite{LeGall16}. We mention that our technique could allow for computing weighted APSP, but the cost would be too large due to our iterative multiplication (as opposed to the previous algorithms that can afford iterative \emph{squaring}).
	Algorithms for approximations of APSP are given in~\cite{Nanongkai14, Censor-HillelKK15,LeGall16}.

	Note that for all graph problems, Lenzen's routing scheme~\cite{Lenzen13} (see Section~\ref{subsec:preliminaries}) implies that every node can learn the entire structure of $G$ within $O(m/n)$ rounds, where $m$ is the number of edges (this can also be obtained by a simpler scheme).
	
	\textbf{Sequential matrix multiplication:}
The works of Gustavson~\cite{Gustavson78} and of Yuster and Zwick~\cite{YusterZ05} give matrix multiplication algorithms that are faster than $O(n^\omega)$, for sparse matrices. In the latter, the exact complexity depends also on the exponents of certain rectangular matrix multiplications. In a nutshell, the latter algorithm cleverly splits the input matrices into two sets of rows and columns, one dense and one very sparse, by balancing the complexities of multiplying each part. This algorithm is designed to reduce the number of multiplication operations, which is not a direct concern for distributed algorithms, in which the main cost is due to communication. Le Gall~\cite{LeGall12} improved this result for some range of sparsity by improving general rectangular matrix multiplication, for which a further improvement was recently given by Le Gall and Urrutia~\cite{LeGallU18}. Kaplan et al.~\cite{KaplanSV06} give an algorithm for multiplying sparse rectangular matrices, and Amossen and Pagh~\cite{AmossenP09} give a fast algorithm for the case of sparse square matrices for which the product is also sparse.

\textbf{Parallel matrix multiplication:}
There are many known matrix multiplication algorithms in parallel models of computing, of which we give a non-exhaustive overview here. These algorithms are typically categorized as 1D, 2D or 3D algorithms, according to the manner in which the element-wise products are split across the nodes (by pages, rectangular prisms, or sub-cubes, respectively). Algorithms are also distinguished according to whether they are \emph{sparsity-dependent}, that is, whether they leverage the structure of the non-zero elements rather than their number only.

For example, the work of Ballard et al.~\cite{BallardDKS16} looks for a good assignment to nodes by modeling the problem as a hypergraph. Randomly permuting the rows or columns of the input matrices can be expected to result in a balanced structure of nonzero elements. Examples for algorithms relying on \emph{random permutations} of the input matrices can be found in~\cite{BulucG11,BulucG12}.

Additional study appears in Solomonik et al.~\cite{SolomonikBVH17} and in Azad et al.~\cite{AzadBBDGSTW16}. The latter proposes algorithms of various types of dimensionality and also employ permutations on the input matrices.

In a similar spirit to random permutations, fast algorithms can be devised for \emph{random matrices}, as shown by the work of Ballard et al.~\cite{BallardBDGLST13}, which provides fast matrix multiplication algorithms for matrices representing sparse random Erdos-Renyi graphs. The random positioning of the nonzero elements gives rise to the analysis of the complexity of their algorithm.

Solomon and Demmel~\cite{SolomonikD11} give a 2.5-dimensional matrix multiplication algorithm, in the sense that the cube is split to $(n/c)^{1/2} \times (n/c)^{1/2} \times c$ sub-cubes. A recent work by Lazzaro et al.~\cite{LazzaroVHS17} provides a 2.5D algorithm that is also suitable for sparse matrices, which also employs random permutations of rows and columns.

We note that the sparse-dense parallel multiplication algorithm of Tiskin~\cite{Tiskin01} shuffles a single matrix in a sparsity-aware manner, but it does so to only \emph{one} of the two multiplied matrices. This work also contains a path-doubling technique for computing APSP, but it is not clear what improvement this would constitute in the complexity measures addressed in our paper.
An additional algorithm for sparse-dense multiplication is given in Koanantakool et al.~\cite{KoanantakoolABM16}.

Finally, we note that experimental studies appear in some of the above papers, and in additional works, such as by Ahmed et al.~\cite{AhmedHHRP17} and by Deveci et al.~\cite{DeveciTR18}.

In comparison to all of the above, our algorithm is a sparsity-dependent 3D algorithm. Yet, our algorithm presents the additional complication of determining the dimensions of the sub-cube assigned to each node \emph{dynamically}, depending on the number of non-zero elements in each input matrix. Moreover, the sub-cubes may be of \emph{non-consecutive} pages. Further, our transformations on the rows and columns are \emph{deterministic} and are applied to \emph{both} matrices. Being able to permute the two matrices and assign pages to nodes in ways which leverage the sparsity of both matrices is the crux of the novelty of our algorithm.

	\subsection{Preliminaries}
	\label{subsec:preliminaries}
	\textbf{Model:}
	The \clique model consists of a set $[n]=\{1,\dots,n\}$ of nodes in a fully connected synchronous network, limited by a bandwidth of $O(\log{n})$ bits per message.
	
	In an instance of multiplication of two matrices $S$ and $T$, the input to each node $v$ is row $v$ of each matrix and its output should be row $v$ of $P = S\cdot T$. For a graph problem over a graph $G$ of $n$ nodes, we identify the nodes of the \clique model with the nodes of $G$, and the input to node $v$ in the \clique model is its input in $G$.
	
	As defined earlier, for a matrix $A$ we denote by $nz(A)$ the number of non-zero elements of $A$. Throughout the paper, we also need to refer to the number of non-zero elements in certain sub-matrices or sequences. We will therefore overload this notation, and use $nz(X)$ to denote the number of non-zero elements in any object $X$.

	A pair of integers $(a,b)$ is \emph{$n$-split} if $a,b \in [n]$, both $a$ and $b$ divide $n$, and $n/ab \geq 1$. The requirement that $a$ and $b$ divide $n$ is for simplification only and could be omitted. Eventually, the $n$-split pair that will be chosen is $a=n \cdot nz(S)^{1/3}/nz(T)^{2/3}$ and $b=n \cdot nz(T)^{1/3}/nz(S)^{2/3}$.
	
	For a given $n$-split pair $(a,b)$, it will be helpful to associate each node $v \in [n]$ with three indices, two indicating the $P_{i,j}$ sub-matrix to which the node is assigned, and one distinguishing it from the other nodes assigned to $P_{i,j}$. Hence, we denote each node $v$ also as $v_{i,j,k}$, where $i \in [a]$, $j \in [b]$, and $k \in [n/ab]$. The assignment of indices to the nodes can be any arbitrary one-to-one function from $[n]$ to $[a] \times [b] \times [n/ab]$.

	Throughout our algorithms, we implicitly comply with the following: (I) no information is sent for matrix entries whose value is zero, and (II) when the value of a non-zero entry is sent, it is sent alongside its location in the matrix. Since sending the location within an $n \times n$ matrix requires $O(\log{n})$ bits, the overhead in the complexity is constant.
	
	\textbf{Lenzen's routing scheme:}
	A useful tool in designing algorithms for the \clique model is \emph{Lenzen's routing scheme}~\cite{Lenzen13}. In this scheme, each of the $n$ nodes can send and receive $n-1$ messages (of $O(\log{n})$ bits each) in a constant number of rounds. While this is simple to see for the simplest case where each node sends a single message to every other node, the power of Lenzen's scheme is that it applies to any (multi)set of source-destination pairs, as long as each node is source of at most $n-1$ messages and destination of at most $n-1$ messages. Moreover, the multiset of pairs does not need to be known to all nodes in advance, rather each sender only needs to know the recipient of its messages. Employing this scheme is what underlies our incentive for balancing the number of messages that need to be sent and received by all the nodes.
	
	\textbf{Useful combinatorial claims:}
	The following are simple combinatorial claims that we use for routing messages in a load-balanced manner.
	\begin{claim}
		\label{combi:SimplePartition}
		Let $A=(a_1, \dots, a_t)$ be a finite set and let $1 \leq c \leq t$ be an integer. There exists a partition of $A$ into $\ceil{t/c}$ subsets of size at most $c+1$ each.
	\end{claim}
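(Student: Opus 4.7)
The plan is to prove the claim by an explicit greedy partition, using the Euclidean division of $t$ by $c$. Write $t = qc + r$ with $0 \le r < c$, so that $\ceil{t/c} = q$ when $r = 0$ and $\ceil{t/c} = q+1$ when $r > 0$.

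I would then partition $A$ into consecutive blocks of size exactly $c$: define $B_i = \{a_{(i-1)c+1}, \dots, a_{ic}\}$ for $i = 1, \dots, q$, and, if $r > 0$, additionally define $B_{q+1} = \{a_{qc+1}, \dots, a_{qc+r}\}$. By construction, the $B_i$ form a partition of $A$, the total number of blocks equals $\ceil{t/c}$, and every block has size at most $c$, which is at most $c+1$ as required.

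The main ``obstacle'' is essentially bookkeeping: one has to verify that the block count matches $\ceil{t/c}$ in both cases $r = 0$ and $r > 0$, which follows immediately from the definition of the ceiling function. The slack of $+1$ in the bound $c+1$ is not needed for this straightforward partition, but it makes the claim robust enough to be applied later when the elements $a_i$ are not known in advance to all nodes and one must work with a slightly loose partition (e.g., when rounding partial counts). I expect no serious difficulty.
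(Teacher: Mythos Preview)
Your proof is correct and takes essentially the same approach as the paper: an explicit partition of $A$ into consecutive blocks. The only cosmetic difference is that the paper uses blocks of size $c+1$ (so the $+1$ slack in the bound is actually tight there), whereas you use blocks of size $c$ and correctly observe that the extra $+1$ is not needed.
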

	\begin{proofof}{Claim~\ref{combi:SimplePartition}}
	Let $A_j=\{a_{(j-1)(c+1)+1},\dots a_{j(c+1)}\}$ for every $1\leq j \leq \ceil{t/c}$, where if $j>t$ then we ignore the notation $a_j$. It is easy to verify that each set is of size at most $c+1$, and that $A=\cup_{1\leq j \leq \ceil{t/c}}{A_j}$.
\end{proofof}

	\begin{claim}
		\label{combi:SimpleByAvg}
		Let $A^i=(a^i_1, \dots, a^i_{t_i})$ be a finite set, for $1 \leq i \leq n$. Let $avg = (\sum_{1 \leq i \leq n}{t_i})/n$.
		There exists a partition of each $A^i$ into $\ceil{t_i/avg}$ subsets of size at most $avg+1$ each, such that the total number of subsets is at most $2n$.
	\end{claim}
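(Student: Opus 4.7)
The first half of the claim --- that each $A^i$ admits a partition into $\lceil t_i/avg\rceil$ subsets, each of size at most $avg+1$ --- I would derive as an immediate corollary of Claim~\ref{combi:SimplePartition}. The plan is to apply the previous claim to each $A^i$ individually with an integer parameter close to $avg$: for instance $c=\lfloor avg\rfloor$ (taking $c=1$ in the degenerate case $avg<1$), which yields subsets of size at most $\lfloor avg\rfloor+1\le avg+1$, exactly the bound we want. An equivalent route is to simply distribute the $t_i$ elements of $A^i$ as evenly as possible into $\lceil t_i/avg\rceil$ groups; by pigeonhole each group then has at most $\lceil t_i/\lceil t_i/avg\rceil\rceil\le\lceil avg\rceil\le avg+1$ elements.

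Once the per-set partition is in hand, the total-subset bound is a one-line calculation. Using the elementary inequality $\lceil x\rceil\le x+1$ and the definition $avg=(\sum_i t_i)/n$, I would write
\[
\sum_{i=1}^{n}\lceil t_i/avg\rceil \;\le\; \sum_{i=1}^{n}\bigl(t_i/avg+1\bigr) \;=\; \frac{1}{avg}\sum_{i=1}^{n} t_i \;+\; n \;=\; n+n \;=\; 2n,
\]
which is precisely the stated bound.

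The only subtlety, really bookkeeping rather than an obstacle, is that $avg$ need not be an integer, so one has to be mildly careful both in invoking Claim~\ref{combi:SimplePartition} (whose parameter $c$ is an integer) and in interpreting ``size at most $avg+1$'' when subset sizes are integers. Both concerns dissolve by observing $\lceil avg\rceil\le avg+1$, which lets us freely replace $avg$ by $\lfloor avg\rfloor$ or $\lceil avg\rceil$ where needed without affecting either the per-set size bound or the telescoping sum above. A small edge case worth flagging is when some $t_i<avg$: there $\lceil t_i/avg\rceil=1$ and the trivial partition $\{A^i\}$ already suffices, so the argument goes through uniformly.
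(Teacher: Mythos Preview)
Your proposal is correct and follows essentially the same approach as the paper: invoke Claim~\ref{combi:SimplePartition} per $A^i$, then bound the total via $\sum_i\lceil t_i/avg\rceil\le\sum_i(t_i/avg+1)=2n$, which is exactly the paper's calculation. If anything, you are more careful than the paper, which simply writes ``$c=avg$'' without addressing that $avg$ need not be an integer; your ``equivalent route'' of directly splitting $A^i$ into $\lceil t_i/avg\rceil$ nearly-equal groups cleanly yields the exact subset count the claim demands, whereas your first route via $c=\lfloor avg\rfloor$ would produce $\lceil t_i/\lfloor avg\rfloor\rceil$ subsets, which can exceed $\lceil t_i/avg\rceil$.
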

	\begin{proofof}{Claim~\ref{combi:SimpleByAvg}}
	By Claim~\ref{combi:SimplePartition} with $c=avg$, for every $1 \leq i \leq n$, there is a partition of $A^i$ into $\ceil{t_i/avg}$ subsets of size at most $avg+1$. The total number of subsets is $\sum_{1 \leq i \leq n}{\ceil{t_i/avg}} \leq \sum_{1 \leq i \leq n}{t_i/avg+1} \leq n+ (\sum_{1 \leq i \leq n}{t_i})/avg \leq 2n$, as required.
\end{proofof}

	\begin{claim}
		\label{combi:partitioning}
		Given a sorted finite multiset $A=(a_1, \dots, a_n)$ of natural numbers, an integer $x\in \mathbb{N}$ such that for all $i\in [n]$ it holds that $a_i\leq x$, and an integer $k$ that divides $n$, there exists a partition $A=\cup_{i=j}^{k}{A_j}$ into $k$ multisets $A_j$, $1\leq j \leq k$, of equal size $n/k$, such that for all $1\leq j\leq k$  it holds that $sum(A_j)\leq sum(A)/k+x$.
	\end{claim}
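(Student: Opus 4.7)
The plan is to construct the partition by ``striping'' the sorted sequence: define $A_j = \{a_{(i-1)k+j} : 1 \le i \le n/k\}$ for each $j \in [k]$. By construction each $A_j$ has exactly $n/k$ elements (one from each block of $k$ consecutive entries of $A$), and the $A_j$ jointly partition $A$, so the structural part of the claim is immediate.

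For the sum bound, I would first observe that since $A$ is sorted in non-decreasing order, the map $j \mapsto sum(A_j)$ is itself non-decreasing: comparing stripes entry by entry, $a_{(i-1)k+j+1} \ge a_{(i-1)k+j}$ for every $i$. Therefore $\min_j sum(A_j) = sum(A_1)$ and $\max_j sum(A_j) = sum(A_k)$. Since the average of the $k$ sums is exactly $sum(A)/k$, we have $sum(A_1) \le sum(A)/k$, so it suffices to establish the gap bound $sum(A_k) - sum(A_1) \le x$: this will give $sum(A_j) \le sum(A_k) \le sum(A_1) + x \le sum(A)/k + x$ for every $j$, as required.

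The key computation is a telescoping argument. I would write
\begin{align*}
sum(A_k) - sum(A_1) = \sum_{i=1}^{n/k} (a_{ik} - a_{(i-1)k+1}),
\end{align*}
which is exactly the sum of the ``within-block'' increases over the $n/k$ blocks of size $k$. Expanding the full telescoping identity $a_n - a_1 = \sum_{r=1}^{n-1}(a_{r+1}-a_r)$ and splitting its terms into within-block differences (the ones above) and between-block differences $a_{ik+1}-a_{ik}$, each of which is non-negative because $A$ is sorted, yields $\sum_{i=1}^{n/k}(a_{ik}-a_{(i-1)k+1}) \le a_n - a_1$. Finally, $a_n \le x$ and $a_1 \ge 0$ give $a_n - a_1 \le x$, closing the argument.

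The main obstacle is really just the bookkeeping with block indices and being careful that the discarded between-block differences are non-negative; conceptually the claim reduces to sortedness plus a telescoping estimate, and no partition cleverer than the straightforward stripe is needed.
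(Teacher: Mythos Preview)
Your proof is correct and uses the same striping construction $A_j=\{a_{j+\ell k}:0\le \ell<n/k\}$ as the paper. The only difference is cosmetic: the paper bounds $sum(A_k)$ by observing directly that $sum(A_k)-a_n\le sum(A_j)$ for every $j$ (shift each term of $A_k\setminus\{a_n\}$ up by $j$ positions) and then averaging over $j$, whereas you go through $sum(A_k)-sum(A_1)\le a_n-a_1$ via a telescoping argument; both routes arrive at $sum(A_k)\le sum(A)/k+x$ in one line once the monotonicity of the stripes is in hand.
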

\begin{proofof}{Claim~\ref{combi:partitioning}}
	We show that $A_j=\{a_{j+\ell k}\mid 0\leq \ell < n/k\}$ gives the claimed partition.
	Since $A$ is sorted, we have that $sum(A_j)=\sum_{\ell=0}^{n/k-1}{a_{j+\ell k}} \leq \sum_{\ell=0}^{n/k-1}{a_{k+\ell k}} = sum(A_k)$, for every $j \in [k]$.
	In addition, removing the last element from $A_k$ gives that $sum(A_k)-a_n = \sum_{\ell=0}^{n/k-2}{a_{k+\ell k}} \leq  \sum_{\ell=0}^{n/k-2}{a_{j+(\ell+1)k}} \leq sum(A_j)$, for every $j \in [k]$.
	This implies that $k (sum(A_k)-a_n) \leq \sum_{j=1}^k sum(A_j) = sum(A)$.
	Since $a_n\leq x$, we conclude that
	$sum(A_j) \leq sum(A_k) =  sum(A_k) - a_n + a_n = \frac{k (sum(A_k)-a_n)}{k}+a_n \leq sum(A)/k+x$, for every $j \in [k]$, which completes the proof.
\end{proofof}

	\section{Fast Sparse Matrix Multiplication}
	\label{section:SparseMM}
	Our main result is Theorem \ref{theorem:theorem1}, stating the guarantees of our principal algorithm \alg (Sparse Matrix Multiplication) for fast multiplication of sparse matrices. Algorithm \alg first manipulates the structure of its input matrices and then calls algorithm \algnice (Sparse Balanced Matrix Multiplication), which solves the problem of fast sparse matrix multiplication under additional assumptions on the distributions of non-zero elements in the input matrices, which are defined next. In Section \ref{subsec:GivenSparse}, we show how \alg computes general matrix multiplication $P=ST$, given Algorithm \algnice and Theorem~\ref{theorem:MultiplyGivenNiceST}. Algorithm \algnice, and Theorem~\ref{theorem:MultiplyGivenNiceST} which states its guarantees, are deferred to Section~\ref{subsec:SBMM}.
	
	\begin{theorem-repeat}{theorem:theorem1}
		\TheoremMain
	\end{theorem-repeat}
	
	We proceed to presenting Theorem \ref{theorem:MultiplyGivenNiceST} which discusses \algnice. \algnice multiplies matrices $S'$ and $T'$ in which the non-zero elements are roughly balanced between portions of the rows of $S'$ and columns of $T'$.
	In what follows, for a matrix $A$, the notation $A[x:y][*]$ refers to rows $x$ through $y$ of $A$ and the notation $A[*][x:y]$ refers to columns $x$ through $y$ of $A$.
	In the following definition we capture the needed properties of well-balanced matrices.
	\begin{definition}
		\label{def:nice-matrices}
		Let $S$ and $T$ be $n \times n$ matrices and let $(a, b)$ be an $n$-split pair.
		For every $i \in [a]$ and $j \in [b]$, denote $S_i = S[\SIndecies{i}][*]$ and $T_j = T[*][\TIndecies{j}]$.
		We say that $S$ and $T$ are a \emph{\nice ~pair of matrices with respect to $(a,b)$}, if:
		\begin{itemize}
			\item{$S$-condition:} For every $i \in [a]$, $nz(S_i) \leq nz(S)/a + n$.
			\item{$T$-condition:} For every $j \in [b]$, $nz(T_j) \leq nz(T)/b + n$.
		\end{itemize}
	\end{definition}
	
	These conditions ensure that \emph{bands} of adjacent rows of $S$ and columns of $T$ contain roughly the same number of non-zero elements. We can now state our theorem for multiplying \nice ~matrices, which summarizes our algorithm \algnice.
	\begin{theorem}
		\label{theorem:MultiplyGivenNiceST}
		Given two $n \times n$ matrices $S$ and $T$ and an $n$-split pair $(a, b)$, if $S$ and $T$ are a \nice~ pair with respect to $(a, b)$, then Algorithm \algnice ~deterministically computes the product $P = S \cdot T$ over a semiring in the \clique model, completing in
		$O(nz(S)\cdot b/n^2+nz(T)\cdot a/n^2+n/ab + 1)$ rounds.
	\end{theorem}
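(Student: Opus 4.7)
The plan is to prove the round complexity by presenting \algnice\ in three phases --- page-assignment, entry-routing, and partial-sum reassembly --- each of which I would show runs in $O(nz(S) b / n^2 + nz(T) a / n^2 + n/ab + 1)$ rounds under the \nice\ assumption. I would organize the argument around the observation that, with Lenzen's routing, the round cost of any such phase is the maximum number of messages any node sends or receives, divided by $n$. Thus the core bookkeeping is to balance both the sending and the receiving of the non-zero entries of $S$ and $T$ across the nodes $v_{i,j,k}$.

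For the receiving side, fix a sub-matrix $P_{i,j}$ and consider the $n/ab$ nodes $v_{i,j,k}$ responsible for it. Each such node must compute a subset of the $n$ outer-product pages $S_i[*][k']\cdot T_j[k'][*]$, and thus receive the corresponding column of $S_i$ and row of $T_j$ for every assigned $k'$. Letting $w_{k'} = nz(S_i[*][k']) + nz(T_j[k'][*])$, the \nice\ conditions give $\sum_{k'=1}^n w_{k'} \leq nz(S)/a + nz(T)/b + 2n$, while trivially $w_{k'} \leq n/a + n/b$. I would sort the pages by weight and invoke Claim~\ref{combi:partitioning} with parameter $k = n/ab$, obtaining a partition of $[n]$ into $n/ab$ equal-size (hence \emph{non-consecutive}) page sets, each of total weight at most the average plus $\max_{k'} w_{k'}$, i.e.\ $O(b\cdot nz(S)/n + a\cdot nz(T)/n + n)$. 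Dividing by $n$ for Lenzen yields the target receiving bound.

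For the sending side, a non-zero entry of $S$ is required by exactly $b$ sub-matrix groups (one per $j$), and symmetrically each entry of $T$ by $a$ groups, so the total sending burden is $b\cdot nz(S) + a\cdot nz(T)$. The average per node is within budget, but individual rows may hold far more non-zeros than the average; to flatten this, I would redistribute the entries of each band $S_i$ across its $n/a$ original holders via Claim~\ref{combi:SimpleByAvg}, so every node ends up with $O(nz(S)/n + a)$ entries of $S$, and similarly $O(nz(T)/n + b)$ entries of $T$. Each such entry is then sent to at most $b$ (resp.\ $a$) destinations, one per matching $P_{i,j}$, giving per-node sending load $O(b\cdot nz(S)/n + a\cdot nz(T)/n + ab)$, which is again within budget after dividing by $n$ since $ab \leq n$.

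The delicate step --- and what I expect to be the main technical obstacle --- is that the non-consecutive page partition of each $P_{i,j}$ is chosen locally by its $n/ab$ nodes, so the senders do not a priori know to which node each of their entries must go. I would address this with a request sub-protocol in which every $v_{i,j,k}$ first notifies the (redistributed) holders of the $S$-columns and $T$-rows it was assigned of the destinations; counting requests and responses together only doubles the per-node loads above, so Lenzen's routing still handles both within the same bound. Once the entries arrive, each $v_{i,j,k}$ computes its partial $n/a \times n/b$ sum of pages for $P_{i,j}$ locally. For reassembly, I would have each such node ship row $r'$ of its partial to the global owner of row $(i-1)(n/a)+r'$ of $P$, which then sums the $n/ab$ partial rows it receives from each of the $b$ groups $P_{i,j}$ on its band; the sending and receiving loads of this final step are both $n^2/(ab)$ per node, contributing the $O(n/ab + 1)$ term and completing the bound.
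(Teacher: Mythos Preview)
Your proposal is correct and follows essentially the same three-part scheme as the paper: balance the \emph{receiving} side by sorting the page weights $w_{k'}$ and applying Claim~\ref{combi:partitioning} to get non-consecutive page sets $A_{i,j,k}$ (this is the paper's Compute-Receiving), balance the \emph{sending} side by chopping columns of $S$ and rows of $T$ into near-average subsequences via Claim~\ref{combi:SimpleByAvg} (Compute-Sending), bridge the two with a request/response round in which receivers announce their assigned indices to the subsequence holders (Resolve-Routing), and finish with the $n^2/ab$-message reassembly (Lemma~\ref{lemma:afterExchange}). The only organizational difference is that you redistribute the entries of each band $S_i$ among its $n/a$ original row-holders, whereas the paper redistributes whole columns of $S$ globally across all $n$ nodes; the paper's choice guarantees every node holds entries from at most \emph{two} columns of $S$, which is what makes the request phase a clean $O(1)$ rounds, while your per-band variant gives $O(a)$ columns per holder and hence $O(n)$ requests per node --- still $O(1)$ rounds, but you should state this bound explicitly rather than asserting that requests ``only double'' the response load.
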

	
	We show that $O(1)$ rounds are sufficient in the \clique for transforming any two general matrices $S$ and $T$ to \nice\  matrices $S'$ and $T'$ by invoking standard matrix permutation operations. Therefore, in essence, Algorithm \alg performs permutation operations on $S$ and $T$, generating the matrices $S'$ and $T'$, respectively, invokes \algnice on $S'$ and $T'$ to compute $P' = S'T'$, and finally recovers $P$ from $P'$.

	\subsection{Fast General Sparse Matrix Multiplication - Algorithm \alg}
	\label{subsec:GivenSparse}
	\textbf{Algorithm Description:} First, each node distributes the entries in its row of $T$ to other nodes in order for each node to obtain its column in $T$. Then, the nodes broadcast the number of non-zero elements in their respective row of $S$ and column of $T$, in order for all nodes to compute $nz(S)$ and $nz(T)$. Having this information, the nodes locally compute the $n$-split pair $(a,b)$ that minimizes the expression $nz(S)\cdot b/n^2+nz(T)\cdot a/n^2+n/ab$, which describes the round complexities of each of the three parts of  Algorithm \algnice. It can be shown that the pair $(n\cdot nz(S)^{1/3} / nz(T)^{2/3},  n\cdot nz(T)^{1/3} / nz(S)^{2/3})$ minimizes this expression. Then, the nodes permute the rows of $S$ and columns of $T$ so as to produce matrices $S'$ and $T'$ which have the required balance. Subsequently, Algorithm \algnice is executed on the permuted matrices $S'$ and $T'$, followed by invoking the inverse permutations on the product $P'=S'T'$ in order to obtain the product $P = S \cdot T$ of the original matrices. A pseudocode of \alg is given in Algorithm~\ref{procedure:reduction1}.
	
	\begin{algorithm}
		\ForEach{$u\in [n]$, $u\neq v$}
		{\label{line:one}
			\Send $T[v][u]$ to node $u$
		}
		
		\ForEach{$u\in [n]$, $u\neq v$}
		{
			\Send $nz(S[v][*])$ to node $u$\\
			\Send $nz(T[*][v])$ to node $u$
		}
		
		$nz(S)\leftarrow \sum_{u \in [n]}{nz(S[u][*])}$\\
		$nz(T)\leftarrow \sum_{u \in [n]}{nz(T[*][u])}$
		
		$(a,b) \leftarrow \argmin_{n\text{-split pairs } (a, b)}{\{nz(S)\cdot b/n^2+nz(T)\cdot a/n^2+n/ab\}}$ \label{abCalculating}
		
		Let ${A^S_1,\dots, A^S_a}$ be the partition of the sorted multiset of $\{nz(S[u][*]) | u \in [n]\}$, into $a$ multisets with a bound $x = n$ on its elements, and let ${A^T_1,\dots, A^T_b}$ be the partition of the sorted multiset of $\{nz(T[*][u]) | u \in [n]\}$, into $b$ multisets with a bound $x = n$ on its elements, both proven to exist in Claim~\ref{combi:partitioning}. \label{line:partition}
		
		Let $\sigma$ be a permutation for which its $n \times n$ permutation matrix $A_{\sigma}$ is such that the rows of the matrix $S'=A_{\sigma} S$ that correspond to any single $A^S_u$ are adjacent, and let $\tau$ be a permutation for which its $n \times n$ permutation matrix $A_{\tau}$ is such that the columns of the matrix $T'=T A_{\tau}$ that correspond to any single $A^T_u$ are adjacent. \label{S'T'Calculating}
		
		\Send $S[v][*]$ to node $\sigma(v)$
		
		\Send $T[*][v]$ to node $\tau(v)$ \label{beforeFunctionCall}
		
		\ForEach{$u\in [n]$, $u\neq v$}
		{\label{line:col-to-row}
			\Send $T'[u][v]$ to node $u$
		}

		$P' \leftarrow \algnice(S',T',a,b)$\label{line:algnice}
		
		\ForEach{$u\in [n]$, $u\neq v$}
		{
			\Send $P'[\sigma^{-1}(v)][\tau^{-1}(u)]$ to node $u$
		}
		\caption{\textbf{\alg($S,T$)}: Computing the product $P=S\cdot T$. Code for node $v \in \{1,\dots,n\}$.}
		\label{procedure:reduction1}
	\end{algorithm}

	\begin{proofof}{Theorem~\ref{theorem:theorem1}}
		To prove correctness, we need to show that the matrices $S'$ and $T'$ computed in Line~\ref{S'T'Calculating} are a \nice ~pair of matrices with respect to the $n$-split pair $(a,b)$ that is determined in Line~\ref{abCalculating}. Once this is proven, the correctness of the algorithm is as follows. In Lines \ref{line:one}-\ref{beforeFunctionCall} the matrices $S'$ and $T'$ are computed and are distributed among the nodes such that each node $v \in [n]$ holds row $v$ of $S$ and column $v$ of $T$. The loop of Line~\ref{line:col-to-row} is only for consistency, having the input to \algnice be the respective rows of both $S'$ and $T'$.
		Assuming the correctness of algorithm \algnice given in Theorem \ref{theorem:MultiplyGivenNiceST}, the matrix $P'$ computed in Line~\ref{line:algnice} is the product $P' = S'T'$.
		Finally, in the last loop, node $v$ receives row $v$ of $P=A^{-1}_{\sigma} P' A^{-1}_{\tau}$, completing the correctness of the Algorithm \alg.
		
		We now show that $S'$ and $T'$ are indeed a \nice ~pair of matrices with respect to $(a,b)$. To this end, we first need to show that for all $i \in [a]$, the number of non-zero elements in $S'_i$ is at most $nz(S')/a + n$. By construction, the number of non-zero elements in $S'_i=S'[(i-1)(n/a)+1:i (n/a)][*]$ is exactly $sum(A^{S}_i)$ of the partition computed in Line~\ref{line:partition}. By Claim~\ref{combi:partitioning} this is bounded by $sum(A)/k+x$, which in our case is $nz(S)/a+n=nz(S')/a+n$. Thus, $S'$ satisfies the $S$-condition of Definition~\ref{def:nice-matrices}. A similar argument shows that $T'$ satisfies the $T$-condition of Definition~\ref{def:nice-matrices}.
		
		For the complexity, we sum the number of rounds as follows. The first loop allows every node $v$ to obtain column $v$ of $T$, while in the second loop the nodes exchange the sums of non-zero elements in rows and columns of $S$ and $T$, respectively. Even without the need to resort to Lenzen's routing scheme, both of these loops can be completed within $O(1)$ rounds. A similar argument shows that $O(1)$ rounds suffice for permuting $S$ and $T$ into $S'$ and $T'$, and for permuting $P'$ back into $P$. Thus, all lines of the pseudocode excluding Line~\ref{line:algnice} complete in $O(1)$ rounds. This implies that the complexity of Algorithm \alg ~equals that of Algorithm \algnice when given $S'$, $T'$, $a$, and $b$ as input. By Theorem~\ref{theorem:MultiplyGivenNiceST} and due to the choice of $a$ and $b$ in Line~\ref{abCalculating}, this complexity is $O(\min_{n\text{-split pairs } (a,b)} \{nz(S)\cdot b/n^2+nz(T) \cdot a/n^2+n/ab + 1\})$. Choosing $a=n\cdot nz(S)^{1/3} / nz(T)^{2/3}$ and $b= n\cdot nz(T)^{1/3} / nz(S)^{2/3}$ gives a complexity of $O(nz(S)^{1/3}nz(T)^{1/3}/n + 1)$ rounds, which can be shown to be optimal.
	\end{proofof}

	\subsection{Fast Sparse Balanced Matrix Multiplication - Algorithm \algnice}
	\label{subsec:SBMM}
	Here we present \algnice and prove Theorem \ref{theorem:MultiplyGivenNiceST}. We begin with a short overview of the algebraic computations and node allocation in \algnice. We then proceed to presenting a communication scheme detailing how to perform the computations of \algnice in the \clique model in $O(M_S\cdot b/n^2+M_T\cdot a/n^2+n/ab + 1)$ rounds of communication.
	
	\textbf{Algorithm Description:} Consider the partition of $P$ into $ab$ rectangles, such that $\forall (i, j) \in [a] \times [b]$, \emph{sub-matrix} $P_{i,j} = P[\SIndecies{i}][\TIndecies{j}]$. Each sub-matrix $P_{i,j}$ is an $n/a \times n/b$ matrix, i.e., has $n^2/ab$ entries. Notice that $P_{i,j} = S_i\cdot T_j$. We assign the computation of $P_{i,j}$ to a unique set of $n/ab$ nodes $N_{i,j} = \{v_{i,j,k} | k \in [n/ab]\}$.
	
	In the initial phase of algorithm \algnice, for every $(i,j) \in [a] \times [b]$, each non-zero element of $S_i$ and $T_j$ is sent to some node in $N_{i,j}$. Due to the \nice~property of $S$ and $T$, all $S_i$'s have roughly the same amount of non-zero elements, and likewise all $T_j$'s. Therefore, each set of nodes $N_{i,j}$ receives roughly the same amount of non-zero elements from $S$ and $T$.

	Within each $N_{i,j}$, the computation of $P_{i,j}$ is carried out according to the following framework. For $\ell \in [n]$, denote each page of $P_{i,j}$ by $P_{i,j,\ell} = S_i[*][\ell] \cdot T_j[\ell][*]$. The computation of the $n$ different $P_{i,j,\ell}$ sub-matrices is split among the nodes in $N_{i,j}$ as follows: The set $[n]$ is partitioned into $A_{i,j,1}, \dots, A_{i,j,n/ab}$ such that for each $k \in [n/ab]$, node $v_{i,j,k} \in N_{i,j}$ is required to compute the entries of the matrices in the set $\{P_{i,j,\ell} | \ell \in A_{i,j,k}\}$. Then, node $v_{i,j,k} \in N_{i,j}$ locally sums its computed sub-matrices to produce $P_{i,j}^k = \sum_{\ell \in A_{i,j,k}}P_{i,j,\ell}$.
	Clearly, due to the associativity and commutativity of the addition operation in the semiring, it holds that $P_{i,j} = \sum_{\ell\in [n]}P_{i,j,\ell} = \sum_{k \in [n/ab]}P_{i,j}^k$. Therefore, once every node $v_{i,j,k}$ has $P_{i,j}^k$, the nodes can collectively compute $P$, and redistribute its entries in a straightforward manner such that each node obtains a distinct row of $P$.

	\textbf{Implementing \algnice:} A pseudocode for Algorithm \algnice~ is given in Algorithm~\ref{procedure:ASBMM}, which consists of three components: exchanging information between the nodes such that every node $v_{i,j,k}$ has the required information for computing $P_{i,j,\ell}$ for every $\ell \in A_{i,j,k}$, local computation of $P_{i,j}^k$ for each $(i,j,k) \in [a] \times [b] \times [n/ab]$ and, finally, the communication of the $P_{i,j}^k$ matrices and assembling of the rows of $P$.
	
	\begin{algorithm}
		\algASBMMSubroutineOne($S,T,a,b$)\label{line:exchange}
		
		Locally compute $P_{i,j,\ell}$ for every $\ell \in A_{i,j,k}$\label{line:local1}
		
		Locally compute $P_{i,j}^k = \sum_{\ell\in A_{i,j,k}}{P_{i,j,\ell}}$\label{line:local2}
		
		\ForEach{$t \in [n/a]$}{\label{line:loopsend}
			\Send $P_{i,j}^k[t][*]$ to node of respective row
		}
		
		\ForEach{$\ell \in [n]$}{\label{line:looplocal}
			$P[v][\ell]\leftarrow $ sum of $n/ab$ respective elements received for this entry
		}
		
		\caption{\textbf{\algnice(S,T,a,b)}: Computing the product $P=ST$, for $S$ and $T$ that are \nice~w.r.t. $(a,b)$. Code for node $v \in \protect{[n]}$, which is also denoted $v_{i,j,k}$.}
		\label{procedure:ASBMM}
	\end{algorithm}
	
	The technical challenge is in Line~\ref{line:exchange}, upon which we elaborate below.
	In Lines~\ref{line:local1}-\ref{line:local2}, only local computations are performed, resulting in each node $v_{i,j,k}$ holding $P_{i,j}^k$.
	In Line~\ref{line:loopsend}, each node sends each row of its sub-matrix $P_{i,j}^k$ to the appropriate node, so that in Line~\ref{line:looplocal} each node can sum this information to produce its row in $P$.
	Formally, we prove the following.
	
	\begin{lemma}
		\label{lemma:afterExchange}
		Lines~\ref{line:local1}-\ref{line:looplocal} of Algorithm~\ref{procedure:ASBMM} complete in $O(n/ab + 1)$ rounds, producing a row of $P=ST$ for every node.
	\end{lemma}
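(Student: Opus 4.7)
The plan is to split the lemma into correctness and complexity, noting that only Line~\ref{line:loopsend} involves actual communication — Lines~\ref{line:local1}--\ref{line:local2} and Line~\ref{line:looplocal} are purely local. So the whole round count reduces to bounding per-node send and receive loads in the single routing step, which I would then feed into Lenzen's scheme~\cite{Lenzen13}.

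For the send-side bound, fix a node $v_{i,j,k}$. After Line~\ref{line:local2} it holds the single sub-matrix $P_{i,j}^k$ of dimensions $(n/a)\times(n/b)$, i.e.\ $(n/a)(n/b)=n^2/(ab)$ entries. The loop of Line~\ref{line:loopsend} transmits each of the $n/a$ rows (of length $n/b$) to the unique node whose output row is the corresponding row of $P_{i,j}$, yielding $n^2/(ab)$ unicast messages per sender.

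For the receive-side bound, fix a destination node $v$. Its output row lies in a single row-band $i$, so it is assembled from the $b$ sub-matrices $P_{i,1},\ldots,P_{i,b}$; for each $j\in[b]$, each of the $n/ab$ nodes in $N_{i,j}$ ships exactly one length-$(n/b)$ row toward $v$, for a total of $b\cdot(n/ab)\cdot(n/b)=n^2/(ab)$ messages received. Since both sides are $O(n^2/(ab))$ per node, Lenzen's routing delivers everything in $O(n^2/(ab\cdot n)+1)=O(n/(ab)+1)$ rounds. The correctness of Line~\ref{line:looplocal} then follows from associativity and commutativity of semiring addition: the $n/ab$ contributions received at $v$ for each entry sum to the corresponding entry of $\sum_{k\in[n/ab]}P_{i,j}^k=\sum_{\ell\in[n]}P_{i,j,\ell}=P_{i,j}$, which is the appropriate portion of row $v$ of $P$.

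The one point requiring care is the routing bookkeeping — verifying that each row of each $P_{i,j}^k$ has a single well-defined destination and that destinations aggregate to exactly $n^2/(ab)$ per receiver. This follows from the bijective identification $v\leftrightarrow v_{i,j,k}$ via $[n]\to[a]\times[b]\times[n/ab]$ and the uniqueness of the row-band $i$ containing any given row of $P$, so no further effort is needed beyond the two counting arguments above.
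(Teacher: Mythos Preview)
Your proof is correct and follows essentially the same approach as the paper: both argue that the local lines require no communication, bound the per-node send and receive load in Line~\ref{line:loopsend} at $n^2/(ab)$ messages, and invoke Lenzen's routing to conclude $O(n/(ab)+1)$ rounds. Your receive-side count via $b\cdot(n/ab)\cdot(n/b)$ is just a different factoring of the paper's ``$n/ab$ contributions per entry times $n$ entries,'' and your explicit correctness remark about associativity/commutativity is a welcome addition the paper leaves implicit.
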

	\begin{proofof}{Lemma~\ref{lemma:afterExchange}}
	Lines~\ref{line:local1}-\ref{line:local2} require no communication.
	
	In the loop of Line~\ref{line:loopsend}, each node sends each of the entries of its sub-matrix $P_{i,j}^k$ to a single receiving node, implying that each node sends $n^2/ab$ messages.
	To verify that this is also the number of messages received by each node, recall that each entry of the matrix $P$ is computed in Line~\ref{line:looplocal} as a summation of $n/ab$ entries, each is an entry of $P_{i,j}^k$ for two appropriate values of $i,j$ and for all $k \in [n/ab]$. Since such $n/ab$ messages need to be received for every entry of the row, this results in receiving $n^2/ab$ messages.
	
	For the above, we use Lenzen's routing scheme, which completes in $n/ab$ rounds, completing the proof.
\end{proofof}

	The remainder of this section is dedicated to presenting and analyzing Line~\ref{line:exchange}.
	During this part of the algorithm, for every $(i,j) \in [a] \times [b]$, each entry in $S_i$ and $T_j$ needs to be sent to a node in $N_{i,j}$. As per our motivation throughout the entire algorithm, we strive to achieve this goal in a way which ensures that all nodes send and receive roughly the same number of messages. This leads to the following three challenges which we need to overcome.
	
	\textbf{Sending Challenge:} Initially, node $v$ holds row $v$ of $S$ and row $v$ of $T$. Every column $v$ of $S$ needs to be sent to $b$ nodes - one node in each $N_{i,j}$ for an appropriate $i \in [a]$ and every $j \in [b]$. Similarly, every row of $T$ needs to be sent to $a$ nodes - one in each $N_{i,j}$ for an appropriate $j \in [b]$ and every $i \in [a]$. If we were to trivially choose node $v$ to send all these messages, then node $v$ would need to send $nz(S[*][v])\cdot b + nz(T[v][*])\cdot a$ messages. Since $nz(S[*][v])$ and $nz(T[v][*])$ may widely vary for different values of $v$, it may be the case that some nodes send a significant amount of messages while others are relatively silent.

	\textbf{Receiving Challenge:} Since $S$ and $T$ are \nice~w.r.t. $(a,b)$, for every $(i,j) \in [a]\times[b]$ it holds that the number of messages to be received by each set of nodes $N_{i,j}$ is at most $nz(S)/a + nz(T)/b + 2n$. This ensures that each node set $N_{i,j}$ receives roughly the same amount of messages as every other node set. The challenge remains to ensure that \emph{within} any given node set $N_{i,j}$, every node receives roughly the same number of messages.

	\textbf{Routing Challenge:} When overcoming the above mentioned challenges in a non-trivial manner, all nodes locally determine that they are senders and recipients of certain messages with the guarantee that each node sends and receives roughly the same number of messages.
	However, these partitions of sending and receiving messages are obtained independently and thus are not global knowledge; a sender of a message \emph{does not necessarily know} who the recipient is. The routing challenge is thus to ensure that each node associates the correct recipient with every message that it sends.
	\subsubsection{\algASBMMSubroutineOne($S,T,a,b$)}
	We next present our implementation of \algASBMMSubroutineOne($S,T,a,b$) which solves the above challenges in an on-the-fly manner. To simplify the presentation, we split \algASBMMSubroutineOne($S,T,a,b$) into its three components, as given in the pseudocode of Algorithm~\ref{procedure:ASBMM1}.

	\begin{algorithm}
		Compute-Sending\\
		Compute-Receiving\\
		Resolve-Routing
		\caption{\textbf{\algASBMMSubroutineOne(S,T,a,b)}: Sending each entry of $S_{i}$, $T_j$ to a node in $N_{i,j}$, for every $(i,j) \in {[}a{]} \times {[}b{]}$.}
		\label{procedure:ASBMM1}
	\end{algorithm}

	\textbf{Compute-Sending:} In Compute-Sending, whose pseudocode is given in Algorithm~\ref{procedure:ASBMM1.1}, we overcome the sending challenge.
	The nodes communicate the distribution of non-zero elements across the columns of $S$ and the rows of $T$ and reorganize the entries held by each node such that all nodes hold roughly the same amount of non-zero elements of $S$ and $T$.
	
	Notably, in order to enable fast communication in Resolve-Routing, Algorithm \ref{procedure:ASBMM1.1} must guarantee no node holds entries of more than two columns of $S$ and two rows of $T$.
	
	\begin{algorithm}
		\ForEach{$u\in [n]$, $u\neq v$}
		{
			\Send $S[u][v]$ to node $u$\label{line:col}
		}
		\ForEach{$u\in [n]$, $u\neq v$}
		{
			\Send $nz(S[*][v])$ to node $u$ \label{line:nzS}\\
			\Send $nz(T[v][*])$ to node $u$ \label{line:nzT}
		}
		
		$avg(S)\leftarrow (\sum_{u \in [n]}{nz(S[*][u])})/n$\\
		$avg(T)\leftarrow (\sum_{u \in [n]}{nz(T[u][*])})/n$\\
		
		Let $S_1^v,\dots,S_{\ceil{nz(S[*][v])/avg(S)}}^v$ be a partition of the non-zero elements of $S[*][v]$ into sets of size at most $avg(S) + 1$ and let $T_1^v,\dots,T_{\ceil{nz(T[v][*])/avg(T)}}^v$ be a partition of the non-zero elements of $T[v][*]$ into sets of size at most $avg(T) + 1$, both proven to exist in Claim~\ref{combi:SimplePartition}. We refer to these sets as \emph{subsequences}.
		
		Assign two subsequences of $S$, denote by $B_S(v)$, and two subsequences of $T$, denote by $B_T(v)$ to each node $v$. For each subsequence $B$, denote by $v(B)$ the node to which $B$ is assigned. \label{procedure:ASBMM1.1BlockAssignmentStep}
		
		\ForEach{$B \in \{S_1^v,\dots,S_{\ceil{nz(S[*][v])/avg(S)}}^v, T_1^v,\dots,T_{\ceil{nz(T[v][*])/avg(T)}}^v\}$}
		{
			\Send $B$ to node $v(B)$
		}
		\caption{\textbf{Compute-Sending}: Code for node $v \in \protect{[n]}$, which is also denoted $v_{i,j,k}$.}
		\label{procedure:ASBMM1.1}
	\end{algorithm}
	
	\begin{lemma}
		\label{lemma:SendingSolution}
		Algorithm~\ref{procedure:ASBMM1.1} completes in $O(1)$ rounds, after which the entries of $S$ and $T$ are evenly redistributed across the nodes such that every node holds elements from at most 2 columns of $S$ and 2 rows of $T$ and such that every node $v$ knows for every node $u$ the indices of the two columns of $S$ and two rows of $T$ from which the elements which $u$ holds are taken.
	\end{lemma}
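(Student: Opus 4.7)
The plan is to establish the three claims of the lemma separately: the $O(1)$ round complexity, the bound of at most two columns/rows per node, and the global knowledge of the assignment. For the round complexity, I would go through Algorithm~\ref{procedure:ASBMM1.1} segment by segment. The first loop, which gathers each column of $S$ at a single node, has every node sending and receiving at most $n-1$ entries and therefore finishes in $O(1)$ rounds. The two broadcasts of $nz(S[*][v])$ and $nz(T[v][*])$ are also $O(1)$-round for the same reason. The only nontrivial communication is the final dispatch of subsequences. Here node $v$ sends the contents of column $v$ of $S$ partitioned into $\lceil nz(S[*][v])/avg(S) \rceil$ subsequences and row $v$ of $T$ partitioned into $\lceil nz(T[v][*])/avg(T) \rceil$ subsequences, with every entry going to exactly one recipient; even accounting for $O(1)$ metadata per subsequence the outgoing volume is at most $nz(S[*][v]) + nz(T[v][*]) + O(n) = O(n)$. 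On the receiving side, node $v$ is assigned at most two $S$-subsequences of size at most $avg(S)+1$ and two $T$-subsequences of size at most $avg(T)+1$, so its incoming load is $O(avg(S)+avg(T)) = O(n)$. Lenzen's routing scheme therefore completes this step in $O(1)$ rounds.

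For the structural property, I would invoke Claim~\ref{combi:SimpleByAvg} applied to the multiset $\{nz(S[*][u]) \mid u \in [n]\}$ with threshold $avg(S)$ to bound the total number of $S$-subsequences produced across all nodes by $2n$. Hence the assignment in Line~\ref{procedure:ASBMM1.1BlockAssignmentStep} can give each node at most two $S$-subsequences, and since each $S$-subsequence lies within a single column of $S$, every node receives entries from at most two columns. The identical argument applied to $T$ handles the row side. For the global knowledge property, the key observation is that after the broadcasts in Lines~\ref{line:nzS} and~\ref{line:nzT} the vector $(nz(S[*][u]), nz(T[u][*]))_{u \in [n]}$ is common knowledge. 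Both the partition into subsequences (constructed via Claim~\ref{combi:SimplePartition}) and the subsequence-to-node assignment can be made deterministic functions of this vector: for instance, enumerate all $S$-subsequences in canonical order by $(u, \text{subsequence index})$ and give the $(2v-1)$-st and $2v$-th such subsequences to node $v$, and similarly for $T$. Since determining which \emph{column} and \emph{row} indices a given node $u$ ends up receiving from only requires knowing the broadcast vector (and not the actual non-zero positions within those columns and rows), every node can locally reconstruct this information without additional communication.

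The main obstacle is verifying the incoming bound that makes Lenzen's routing applicable. Without the $avg+1$ size cap on subsequences, a column of $S$ with many non-zero entries could dump $\Theta(n)$ entries onto a single recipient, and two such columns landing at the same node would inflate its incoming load beyond what Lenzen's scheme handles in constant rounds. The twin guarantees of Claim~\ref{combi:SimpleByAvg}---that each subsequence has size at most $avg+1$ and that only $2n$ subsequences exist in total---are precisely what allow a deterministic two-per-node assignment while keeping each node's incoming load at $O(avg(S)+avg(T)) = O(n)$, which is the crux of the lemma.
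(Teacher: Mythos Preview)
Your proposal is correct and follows essentially the same approach as the paper's proof: both argue that the initial column-gathering and broadcast steps take $O(1)$ rounds, invoke Claim~\ref{combi:SimpleByAvg} to bound the total number of subsequences by $2n$ for each of $S$ and $T$, observe that the broadcast values make the subsequence decomposition and the assignment common knowledge, and then bound the send/receive loads in the final dispatch to apply Lenzen's routing. The only cosmetic difference is that the paper bounds the incoming load more crudely (each of the at most four received subsequences has size at most $n$, giving $4n$), whereas you use the tighter $avg+1$ cap; both suffice for the $O(1)$-round conclusion.
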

	
	\begin{proof}
		In Lines~\ref{line:col},~\ref{line:nzS},~\ref{line:nzT} the nodes exchange entries of $S$ such that each node holds a distinct column of $S$, and knows the number of non-zero entries in each column of $S$ and in each row of $T$. This allows local computation of the average number of non-zeros in the following two lines, as well as locally computing the (same) partition into subsequences.
		
		By Claim~\ref{combi:SimpleByAvg}, in total across all $n$ columns there are at most $2n$ subsequences of entries from $S$, and similarly there are at most $2n$ subsequences from $T$. Since $\forall u \in [n]$, all nodes know $nz(S[*][u])$ and $nz(T[u][*])$, then all nodes know how many subsequences are created for each $u$. Thus, all nodes can agree in Line~\ref{procedure:ASBMM1.1BlockAssignmentStep} on the assignment of the subsequences, with each node assigned at most 2 subsequences of entries of $S$ and 2 of entries of $T$. Crucially for what follows, all the nodes know the column $\ell$ in $S$ or the row $\ell$ in $T$ to which the subsequence $B$ belongs. We denote this index $\ell(B)$. The entries of each subsequence $B$ are then sent to its node $v(B)$ in the following loop.
		
		For the round complexity, note that a node $v$ sends a single message to every other node in each of Lines~\ref{line:col},~\ref{line:nzS}, and~\ref{line:nzT}. The rest of the computation until Line~\ref{procedure:ASBMM1.1BlockAssignmentStep} is done locally. Therefore, these lines complete within $3$ rounds.
		
		In the last loop of Algorithm~\ref{procedure:ASBMM1.1}, node $v$ potentially sends all subsequences with entries from column $v$ of $S$ and row $v$ of $T$. Due to the facts that each subsequence is sent only once, no subsequences overlap, and all the subsequences which $v$ send are parts of a single column of $S$ and a single row of $T$, node $v$ sends at most $2n$ messages during this loop. Additionally, since every node receives at most $4$ subsequences and each subsequence consists of most $n$ entries, each node receives at most $4n$ messages. Thus, by using Lenzen's routing scheme, this completes in $O(1)$ rounds as well.
	\end{proof}
	
	\textbf{Compute-Receiving:} The pseudocode for Compute-Receiving is given in Algorithm~\ref{procedure:ASBMM1.2}.
	This algorithm assigns the $P_{i,j,\ell}$ matrices to different nodes in $N_{i,j}$. Specifically, each node $v_{i,j,k}$ in $N_{i,j}$ is assigned $ab$ such matrices, while verifying that all nodes in $N_{i,j}$ require roughly the same amount of non-zero entries from $S$ and $T$ in order to compute all their assigned $P_{i,j,\ell}$ matrices. Since each sub-matrix $P_{i,j,\ell}$ is defined as $P_{i,j,\ell}= S_i[*][\ell] \cdot T_j[\ell][*]$, we define the communication cost of computing $P_{i,j,\ell}$ to be $w(P_{i,j,\ell}) = nz(S_i[*][\ell]) + nz(T_j[\ell][*])$. By this definition, in order to obtain that each node in $N_{i,j}$ requires roughly the same amount of messages in order to compute all of its assigned $P_{i,j,\ell}$, we assign the $P_{i,j,\ell}$ matrices to the nodes of $N_{i,j}$ such that the total communication cost, as measured by $w$, of all matrices assigned to a given node is roughly the same for all nodes.
	
	\begin{algorithm}
		\ForEach{$N_{i',j'}, i',j' \in [a] \times [b]$}
		{\label{line:BigLoop}
			\ForEach{$u \in N_{i',j'}$}
			{
				\ForEach{$B \in B_S(v)$}{
					\Send $nz(S_{i'} \cap B)$ to node $u$
				}
				\ForEach{$B \in B_T(v)$}{
					\Send $nz(T_{j'} \cap B)$ to node $u$
				}
			}
		}
		
		\ForEach{$\ell \in [n]$}
		{\label{line:wLoop}
			$w(P_{i,j,\ell}) \leftarrow nz(S_i[*][\ell]) + nz(T_j[\ell][*])$
		}
		
		Let $A_{i,j,1}^{'},\dots,A_{i,j,n/ab}^{'}$ be a partition of the sorted multiset $\{w(P_{i,j,\ell}) | \ell \in [n]\}$ into $n/ab$ multisets with a bound $x = 2n$ on its elements, proven to exist in Claim \ref{combi:partitioning}.
		
		Let $A_{i,j,1},\dots,A_{i,j,n/ab}$ be a partition of $[n]$ such that for every $k \in [n/ab]$, $A_{i,j,k}^{'}=\{w(P_{i,j,\ell}) | \ell \in A_{i,j,k}\}$.
		\label{procedure:ASBMM1.2:partitionOfPijkMatricies}.
		
		\caption{\textbf{Compute-Receiving}: Code for node $v \in \protect{[n]}$, which is also denoted $v_{i,j,k}$.}
		\label{procedure:ASBMM1.2}
	\end{algorithm}

	\begin{lemma}
		\label{lemma:ReceivingSolution}
		Algorithm \ref{procedure:ASBMM1.2} completes in $O(1)$ rounds, after which each node $v_{i,j,k}$ is assigned a subset $A_{i,j,k} \subseteq [n]$, \emph{s.t.} $\forall k \in [n/ab]$ it holds that $\sum_{\ell \in A_{i,j,k}}{w(P_{i,j,\ell})} \leq \frac{n}{ab}\sum_{\ell \in [n]}{w(P_{i,j,\ell})}+2n$.
	\end{lemma}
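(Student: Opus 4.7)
The plan is to split the proof into two parts: (i) bounding the communication cost of the loop in Algorithm~\ref{procedure:ASBMM1.2} via Lenzen's routing scheme, and (ii) verifying that the local partition produced in lines after the loop has the claimed sum bound. I would first observe that by the indexing convention described in Section~\ref{subsec:preliminaries}, the sets $\{N_{i',j'}\}_{(i',j')\in[a]\times[b]}$ form a partition of $[n]$, so the doubly-nested loop on Line~\ref{line:BigLoop} visits each destination $u \in [n]$ exactly once. For each such $u$, node $v$ transmits at most $|B_S(v)| + |B_T(v)| \le 4$ messages, because Lemma~\ref{lemma:SendingSolution} guarantees that $v$ holds at most two subsequences of $S$ and two of $T$ after Compute-Sending. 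Thus $v$ sends $O(n)$ messages in total, and a symmetric count shows that every node $u \in N_{i',j'}$ receives at most four messages from each of the $n$ senders, for $O(n)$ incoming messages as well. Lenzen's routing scheme therefore completes the exchange in $O(1)$ rounds, and all subsequent steps (computing $w$'s and invoking the partitioning claim) are local.

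I would then argue that $v_{i,j,k}$ can in fact compute $w(P_{i,j,\ell}) = nz(S_i[*][\ell]) + nz(T_j[\ell][*])$ for every $\ell \in [n]$ from what it has received. By Lemma~\ref{lemma:SendingSolution}, every node knows, for each other node $u$, which two columns of $S$ and which two rows of $T$ the subsequences in $B_S(u),B_T(u)$ belong to. Hence when $v_{i,j,k}$ receives a count $nz(S_{i} \cap B)$, it also knows the column index $\ell(B)$. Summing these counts over all subsequences $B$ associated with column $\ell$ reconstructs $nz(S_i[*][\ell])$ exactly, since the subsequences partition the non-zero entries of that column; the row-analog yields $nz(T_j[\ell][*])$. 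So $w(P_{i,j,\ell})$ is correctly computed for every $\ell$.

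Finally, I would invoke Claim~\ref{combi:partitioning} on the multiset $\{w(P_{i,j,\ell}) \mid \ell \in [n]\}$ with $k = n/ab$ parts and element bound $x = 2n$. The bound is valid because $S_i$ has $n/a$ rows and $T_j$ has $n/b$ columns, so $nz(S_i[*][\ell]) \le n/a \le n$ and $nz(T_j[\ell][*]) \le n/b \le n$, giving $w(P_{i,j,\ell}) \le 2n$. The claim then produces a partition into $n/ab$ equal-sized multisets $A^{'}_{i,j,1},\dots,A^{'}_{i,j,n/ab}$ with $\sum_{\ell \in A_{i,j,k}} w(P_{i,j,\ell}) \le \tfrac{ab}{n}\sum_{\ell \in [n]} w(P_{i,j,\ell}) + 2n$, which implies the (weaker) bound asserted by the lemma since $n/ab \ge 1 \ge ab/n$ for any $n$-split pair. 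Because every node executes the same deterministic partitioning on the same aggregated values, the assignments $A_{i,j,k}$ are consistent across $N_{i,j}$ without any further communication.

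The only real obstacle I expect is the bookkeeping that links the two lemmas: I need the provenance information carried forward from Lemma~\ref{lemma:SendingSolution} (the column/row index $\ell(B)$ for each subsequence) to be sufficient for each $v_{i,j,k}$ to assemble the full $w(P_{i,j,\ell})$ values from the constant-sized per-sender messages — otherwise the $O(1)$-round count would inflate. Everything else reduces to counting messages and applying the preceding combinatorial claim.
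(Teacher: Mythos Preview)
Your proof is correct and matches the paper's approach: bound the loop at four messages per sender--receiver pair (hence $O(1)$ rounds, without even needing Lenzen), then apply Claim~\ref{combi:partitioning} locally with $k=n/ab$ and $x=2n$. You supply more detail than the paper does (justifying $x=2n$ via the $n/a$ and $n/b$ dimensions, and explaining how $w(P_{i,j,\ell})$ is reconstructed from the per-subsequence counts using the provenance $\ell(B)$), and you correctly spot that Claim~\ref{combi:partitioning} actually yields the coefficient $ab/n$ rather than the $n/ab$ appearing in the lemma statement --- the paper's downstream use in the proof of Lemma~\ref{lemma:KnowledgeSolution} confirms that $ab/n$ is what is intended.
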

	
	\begin{proof}
		The loop of Line~\ref{line:BigLoop} provides each node of $N_{i',j'}$ with the number of non-zero elements in each column of $S_{i'}$ and each row of $T_{j'}$. This allows the nodes to compute the required communication costs in Line~\ref{line:wLoop}. Claim~\ref{combi:partitioning} implies that after executing Line~\ref{procedure:ASBMM1.2:partitionOfPijkMatricies}, each node $v_{i,j,k}$ is assigned a subset $A_{i,j,k} \subseteq [n]$, such that for every $k \in [n/ab]$ it holds that $\sum_{\ell \in A_{i,j,k}}{w(P_{i,j,\ell})} \leq \frac{n}{ab}\sum_{\ell \in [n]}{w(P_{i,j,\ell})}+2n$.
		
		Every node sends every other node exactly $4$ messages throughout the loop in Line~\ref{line:BigLoop}, while the remaining lines are executed locally for each node, without communication. As such, this completes in $O(1)$ rounds in total.
	\end{proof}

	\textbf{Resolve-Routing:} Roughly speaking, we solve this challenge by having the recipient of each possibly non-zero entry deduce which node is the sender of this entry, and inform the sender that it is its recipient, as follows. At the end of the execution of Compute-Sending in Algorithm~\ref{procedure:ASBMM1.1}, every node $v$ has at most two subsequences in $B_S(v)$ and at most two subsequences in $B_T(v)$. Moreover, the subsequence assignment is known to all nodes due to performing the same local computation in Line~\ref{procedure:ASBMM1.1BlockAssignmentStep}. On the other hand, upon completion of Algorithm~\ref{procedure:ASBMM1.2}, node $v_{i,j,k}$ is assigned the task of computing $P_{i,j,\ell}$ for every $\ell \in A_{i,j,k}$. For this, it suffices for $v_{i,j,k}$ to know the non-zero entries of column $\ell$ of $S_i$ and of row $\ell$ of $T_j$.
	
	Hence, in Resolve-Routing, given in Algorithm~\ref{procedure:ASBMM1.3}, node $v_{i,j,k}$ sends every index $\ell \in A_{i,j,k}$ to the nodes that hold subsequences of column $\ell$ in $S$ and row $\ell$ in $T$. Notice that $v_{i,j,k}$ does not know which indices inside these columns and rows are non-zero. However, the nodes which hold these subsequences have this information, and respond with the non-zero entries of the respective columns and rows that are part of $S_i$ or $T_j$. 

	\begin{algorithm}
		\ForEach{$\ell \in A_{i,j,k}$}
		{\label{procedure:ASBMM1.3Label1}
			\ForEach{node $u$ for which there exists $B \in B_S(u)$ such that $\ell(B)=\ell$}
			{
				\Send $\ell$ to node $u$\label{stepAbove1}
			}			
			\ForEach{node $u$ for which there exists $B \in B_T(u)$ such that $\ell(B)=\ell$}
			{
				\Send $\ell$ to node $u$\label{stepAbove2}
			}		
		}
		\label{procedure:ASBMM1.3Label2}
		\ForEach{message $\ell$ received from node $v_{i',j',k'}$ in Line~\ref{stepAbove1}}
		{
			\label{procedure:ASBMM1.3Label3}
			\ForEach{$B \in B_S(v)$}
			{	
				\Send $S[\SIndecies{i'}][\ell]\cap B$  to node $v_{i',j',k'}$\label{procedure:ASBMM1.3Label4}
			}
		}
		
		\ForEach{message $\ell$ received from node $v_{i',j',k'}$ in Line~\ref{stepAbove2}}
		{
			\label{procedure:ASBMM1.3Label5}
			\ForEach{$B \in B_T(v)$}
			{					
				\Send  $T[\ell][\TIndecies{j'}]\cap B$  to node $v_{i',j',k'}$\label{procedure:ASBMM1.3Label6}
			}
		}
		
		\caption{\textbf{Resolve-Routing}: Code for node $v \in \protect{[n]}$, which is also denoted $v_{i,j,k}$.}
		\label{procedure:ASBMM1.3}
	\end{algorithm}

	\begin{lemma}
		\label{lemma:KnowledgeSolution}
		Algorithm~\ref{procedure:ASBMM1.3} completes in $O(nz(S)\cdot b/n^2+nz(T)\cdot a/n^2 + 1)$ rounds, after which each node $v_{i,j,k}$ has $S[\SIndecies{i}][\ell]$ and $T[\ell][\TIndecies{j}]$, for every $\ell \in A_{i,j,k}$.
	\end{lemma}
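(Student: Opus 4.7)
The plan is to establish correctness separately from the round complexity, and to analyze the round complexity by splitting Algorithm~\ref{procedure:ASBMM1.3} into a \emph{request sub-phase} (Lines~\ref{procedure:ASBMM1.3Label1}--\ref{procedure:ASBMM1.3Label2}) and a \emph{response sub-phase} (the remaining lines), applying Lenzen's scheme separately to each. Correctness is immediate from Lemma~\ref{lemma:SendingSolution}: since the subsequences $S_1^{v'}, S_2^{v'}, \ldots$ partition the non-zero entries of each column $v'$ of $S$ and their assignment is common knowledge, when $v_{i,j,k}$ sends $\ell$ to every holder of a subsequence $B$ with $\ell(B)=\ell$ and each holder returns $S[\SIndecies{i}][\ell]\cap B$, the union of these responses recovers $S[\SIndecies{i}][\ell]$; the argument for $T[\ell][\TIndecies{j}]$ is symmetric.

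For the request sub-phase I would argue that both send and receive loads are $O(n)$, hence the sub-phase costs $O(1)$ rounds via Lenzen's scheme. The send load of $v=v_{i,j,k}$ is $\sum_{\ell \in A_{i,j,k}} \bigl(\lceil nz(S[*][\ell])/avg(S)\rceil + \lceil nz(T[\ell][*])/avg(T)\rceil\bigr)$; using $avg(S)=nz(S)/n$, the global upper bound $\sum_{\ell \in A_{i,j,k}} nz(S[*][\ell]) \leq nz(S)$ (which does \emph{not} rely on any balance of the $A_{i,j,k}$'s), and $|A_{i,j,k}|=ab\leq n$, this sum is $O(n)$. On the receive side, Lemma~\ref{lemma:SendingSolution} guarantees that each node holds at most two subsequences from $S$ and two from $T$, and each such $B$ with $\ell(B)=v'$ is requested by exactly the unique node $v_{i,j,k}$ satisfying $v'\in A_{i,j,k}$ for each pair $(i,j)$, yielding at most $4ab=O(n)$ requests.

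For the response sub-phase, the receive load at $v_{i,j,k}$ equals $\sum_{\ell \in A_{i,j,k}}\bigl(nz(S_i[*][\ell]) + nz(T_j[\ell][*])\bigr)$, which by Claim~\ref{combi:partitioning} applied in Compute-Receiving (with element bound $x=2n$ and $k=n/ab$) is at most $(ab/n)(nz(S_i)+nz(T_j))+2n$; the \nice\ property of $S$ and $T$ then bounds this by $O(b\cdot nz(S)/n + a\cdot nz(T)/n + ab + n)$. The send load at any node $v$ is more delicate: for each $B \in B_S(v)$ with $\ell(B)=v'$, summing the response sizes $nz(S[\SIndecies{i'}][v']\cap B)$ over the requesters $v_{i',j',k'}$ (exactly one per pair $(i',j')$) telescopes, since the row bands $S[\SIndecies{i'}][*]$ for $i'\in[a]$ tile $S$, giving $\sum_{i'\in[a]} nz(S[\SIndecies{i'}][v']\cap B)=nz(B)$ and hence a total of $b\cdot nz(B)$; with $|B_S(v)|\leq 2$ and $nz(B)\leq avg(S)+1 = nz(S)/n+1$, the $S$-contribution is $O(b\cdot nz(S)/n + b)$, and symmetrically $O(a\cdot nz(T)/n + a)$ from $T$. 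Taking the maximum of send and receive, dividing by $n$, and absorbing $ab/n\leq 1$ and $(a+b)/n\leq 2$ into the additive constant yields $O(nz(S)\cdot b/n^2 + nz(T)\cdot a/n^2 + 1)$ rounds via Lenzen's scheme, which combined with the $O(1)$-round request sub-phase proves the bound. The main obstacle I anticipate is this response-send accounting: a naive count would multiply the size of a subsequence by $a$ (or $b$) requesters per pair and inflate the bound by that factor, so the telescoping identity above is what keeps the $S$- and $T$-contributions tight.
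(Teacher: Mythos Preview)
Your proof is correct and follows essentially the same decomposition and accounting as the paper. The only noteworthy difference is in the request sub-phase: the paper observes more directly that each node sends at most four messages to every other node (since the receiver holds subsequences from at most two columns of $S$ and two rows of $T$), which yields a $4$-round bound without invoking Lenzen, whereas your global $O(n)$ send/receive bound via Lenzen is correct but slightly less direct. For the response sub-phase your argument matches the paper's, and your explicit telescoping $\sum_{i'\in[a]} nz(S[\SIndecies{i'}][v']\cap B)=nz(B)$ is in fact a cleaner justification of the paper's terse claim that each $B\in B_S(v)$ contributes $b\cdot nz(B)$ to the send load.
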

	
	\begin{proofof}{Lemma~\ref{lemma:KnowledgeSolution}}
	By Lemma~\ref{lemma:SendingSolution}, every node knows $\ell(B), v(B)$ for every subsequence $B$, implying that Lines \ref{procedure:ASBMM1.3Label1}-\ref{procedure:ASBMM1.3Label2} can be executed. In Lines \ref{procedure:ASBMM1.3Label3}-\ref{procedure:ASBMM1.3Label6}, each node $v_{i,j,k}$ receives $S[\SIndecies{i}][\ell]$ and $T[\ell][\TIndecies{j}]$, for every $\ell \in A_{i,j,k}$, completing the correctness proof.
	
	For the round complexity, notice that by Lemma~\ref{lemma:SendingSolution}, for each node $u$, $B_S(u)$ contains entries from at most two distinct columns of $S$ and $B_T(u)$ contains entries from at most two distinct rows of $T$. Therefore, every node sends at most $4$ messages to every other node throughout Lines \ref{procedure:ASBMM1.3Label1} - \ref{procedure:ASBMM1.3Label2}. Thus, this part completes in $4$ rounds.
	
	We now show that Lines \ref{procedure:ASBMM1.3Label3} - \ref{procedure:ASBMM1.3Label6} complete in $O(nz(S)\cdot b/n^2+nz(T)\cdot a/n^2 + 1)$ rounds.
	Each node $v$ sends the entries of every $B \in B_S(v)$ to a single node in each of $b$ sets $N_{i',j'}$. Since $nz(B) \leq nz(S)/n+1$ by Claim~\ref{combi:SimpleByAvg}, this implies that each node sends at most $O(nz(S)\cdot b/n)$ entries in Lines \ref{procedure:ASBMM1.3Label3} - \ref{procedure:ASBMM1.3Label4}. A similar argument shows that each node sends at most  $O(nz(T)\cdot a/n)$ entries for each of the two $B \in B_T(v)$. In total, this sums up to sending at most $O(nz(S)\cdot b/n+nz(T)\cdot a/n)$ entries by each node. Likewise, we show that this is the number of entries that need to be received by each node. This is because the number of non-zero entries from $S$ and $T$ required for $v_{i,j,k}$ to compute the entries of the matrices $P_{i,j,\ell}$ for $\ell \in A_{i,j,k}$ is at most $(nz(S)/a + n)/(n/ab) + 2n +(ns(T)/b + n)/(n/ab) + 2n$, by Lemma~\ref{lemma:ReceivingSolution}. Due to the fact that $n/ab \geq 1$, the previous expression is bounded above by $(nz(S)/a)/(n/ab) + (nz(T)/b)/(n/ab) + 6n = nz(S)\cdot b/n+nz(T)\cdot a/n + 6n$.
	
	Finally, by Lenzen's routing scheme, Lines \ref{procedure:ASBMM1.3Label3} - \ref{procedure:ASBMM1.3Label6} complete in $O(nz(S)\cdot b/n^2+nz(T)\cdot a/n^2  + 1)$, completing the proof.
\end{proofof}

We can now wrap-up the proof of Theorem~\ref{theorem:MultiplyGivenNiceST}.

	\begin{proofof}{Theorem~\ref{theorem:MultiplyGivenNiceST}}
		Lemma~\ref{lemma:KnowledgeSolution} implies that each node $v_{i,j,k}$ has the required entries of $S$ and $T$ in order to compute $P_{i,j,\ell}$ for every $\ell \in A_{i,j,k}$. Lemma~\ref{lemma:afterExchange} then gives that Algorithm \algnice correctly produces a row of $P=ST$ for each node.
		
		Lemmas~\ref{lemma:SendingSolution} and \ref{lemma:ReceivingSolution} show that Compute-Sending and Compute-Receiving complete in $O(1)$ rounds.
		Lemma~\ref{lemma:KnowledgeSolution} gives the claimed round complexity of $O(nz(S)\cdot b/n^2+nz(T)\cdot a/n^2 + 1)$ for Resolve-Routing, giving the same total number of rounds for \algASBMMSubroutineOne.
		By Lemma~\ref{lemma:afterExchange}, the remainder of Algorithm \algnice completes in $O(n/ab + 1)$ rounds, completing the proof.
	\end{proofof}

\section{APSP and counting 4-cycles}
\label{sec:applications}
As applications of Algorithm \alg, we improve upon the state-of-the-art in the \clique model, for several fundamental graph problems, when considering sparse graphs. In Section~\ref{sec: SGD}, we utilize SMM alongside an additional algorithm for calculating the trace of the product of two matrices in order to count the number of 4-cycles of a given graph $G$. In Section~\ref{sec: APSP}, we utilize \alg for computing APSP in a way which is faster for some range of parameters that depends on the sparsity and the diameter of $G$.

In what follows, given a graph $G$, we denote by $m$ the number of its edges and by $A_G$ its adjacency matrix.

\subsection{Counting 4-cycles}
\label{sec: SGD}
In order to compute the number of $4$-cycles of $G$, it is sufficient to compute the trace\footnote{The \emph{trace} of a matrix is the sum of the entries that lie on its main diagonal.} of $A^4_G$ and the degrees of the nodes, as described in~\cite{Censor-HillelKK15}. This allows us to utilize Algorithm \alg for squaring the adjacency matrix $A_G$ and deducing the trace of $A^4_G$. It is noteworthy that we can avoid raising $A_G$ to the power of 4, and compute the trace of this power by only squaring $A_G$.

\begin{theorem-repeat}{theorem:counting}
	\TheoremCounting
\end{theorem-repeat}

\begin{proof}
	First, observe that given two $n\times n$ matrices $A,B$, it is possible to compute the trace of the matrix $A\cdot B$ in $O(1)$ communication rounds in the \clique model. This is done by redistributing the matrix $T$ across the nodes such that node $v$ holds column $v$ instead of row $v$ of $T$, and having each node $v \in [n]$ locally compute the diagonal entry $P[v][v]=S[v][*]\cdot T[*][v]$. Then, each node $v$ broadcasts $P[v][v]$ to all other nodes in a single round, and thus all nodes are able to sum these values and deduce $trace(AB)$.
	
	For counting $4$-cycles, the nodes first execute Algorithm \alg for obtaining $A_G^2$, and then compute $trace(A_G^2 \cdot A_G^2)$, as explained above. By a result of Alon et al.~\cite{AlonYZ97}, the number of $4$-cycles in $G$ equals $\frac{1}{8}(trace(A_G^4-\sum_{v \in [n]}{(2d_v^2-d_v)})$, where $d_v$ is the degree of $v$ in $G$. Since all nodes can obtain $d_v$ for all nodes $v$ in a single round of broadcasting the degrees, by Corollary~\ref{cor:m-non-zero}, this procedure completes in $O(m^{2/3}/n + 1)$ rounds.
\end{proof}

We remark that Alon et al.~\cite{AlonYZ97} has similar formulas for traces of larger values of $k$.

\subsection{APSP}
\label{sec: APSP}
In order to compute APSP for a given graph $G$, it is sufficient to compute $A^D_G$ over the min-plus semiring\footnote{In the min-plus semiring, $P[i][j]=\min_k{(S[i][k]+T[k][j])}$.}, where $D$ is the diameter of $G$. Notice that we cannot use the approach of~\cite{Censor-HillelKK15} which repeatedly squares the adjacency matrix, thus paying only a logarithmic overhead beyond a single multiplication, because powers of a sparse matrix may be dense. However, if we know $D$, then by repeatedly applying Corollary~\ref{cor:one-sparse} $D-1$ times, we can compute APSP in $O(D(m/n)^{1/3} + 1)$ rounds. In fact, any constant approximation of $D$ suffices, and hence we first run a simple BFS computation in order to obtain a $2$-approximation $\tilde{D}$ for $D$, which we then follow with raising $A_G$ to the power of $2\tilde{D}$. This gives the following.

\begin{theorem-repeat}{theorem:APSP}
	\TheoremAPSP
\end{theorem-repeat}

\section{Triangle Listing}
\label{sec:triangle-listing}

In \cite{Pandurangan0S16}, Pandurangan et al. show a randomized algorithm for listing all triangles, completing in $\tilde{O}(m/n^{5/3})$ rounds, w.h.p. Our contribution is a \emph{deterministic} algorithm which completes in $O(m/n^{5/3} + 1)$ rounds always. For dense graphs, this is not worse than the tight bound of $\Theta(n^{1/3})$, as given by the lower bounds of~\cite{IzumiG17, Pandurangan0S16}.

\begin{theorem-repeat}{theorem:triangleListing}
	\TheoremListing
\end{theorem-repeat}

In fact, our algorithm will apply also to directed graphs, being able to distinguish directed triangles. Each edge in the graph is oriented, and for every $v \in V$, let $d_{in}(v)$ and $d_{out}(v)$ denote the in and out degrees of node $v$, respectively. The undirected case follows easily, for example by imagining that every edge represents two edges in opposite directions.

Before giving the algorithm, we begin with some definitions and notations.
For two sets $S_1, S_2 \subseteq V$ we denote the set of edges from $S_1$ to $S_2$ by $E(S_1, S_2) = E\cap (S_1 \times S_2)$. When one of the sets is a singleton we abuse notation and write, e.g., $E(v, S_1)$. Our algorithm uses several partitions of sets of elements that are computed by the nodes after learning the amounts of elements, similarly to what we do in our matrix multiplication algorithm.

\begin{definition}
	\textbf{Equally Sized Partition}: A set of sets $\{V_1, \dots, V_t\}$ is a \emph{$t$-equally sized partition} of a set $V$ if $V = \bigcup_{i \in [t]} V_i$, $\forall i\neq j \in [t]: V_i \bigcap V_j = \emptyset$, and $ \forall i \in [t]: |V_i| = n/t$.
\end{definition}

For a partition $\{A_1, \dots, A_t\}$ of $V$, we sometimes need an \emph{internal numbering} of the nodes within each set $A_i$ which is uniquely determined by the node identifiers. Thus, every node in $V$ has a unique pair of indexes $(i,j)$ such that $a_{i,j}$ is the $j$-th node in the internal numbering of $A_i$.

We will use the following predefined notation: Let $\{D_1, \dots, D_{n^{2/3}}\}$ is a fixed globally known $n^{2/3}$-equally sized partition of $V$.
We denote $\alpha = m/n^{1/3} + n, \beta = m/n^{2/3} + n$.

Our algorithm will use, among other computations, three subroutines that we will describe separately. The first, $\Broadcast(m)$, consists of the node $v$ sending a message $m$ to every node in the graph. The second and third, $\LearnEdges(X)$ and $\LearnPaths(X)$ result in node $v$ gaining knowledge of all the elements in the set $X$. The latter are implemented in two different ways, for their two different uses. Notice that in \LearnPaths there is execution of the Resolve-Routing algorithm; in order to match the notation used in Algorithm \ref{procedure:triangleListing}, the indexing in lines 8, 11 of Resolve-Routing should be replaced with the edges exiting and entering node sets $V_{j_\delta}, V_{i_\delta}$, respectively.

\begin{algorithm}
	\Broadcast($d_{in}(v), d_{out}(v)$) \label{step1}
	
	All nodes partition $V$ into an $n^{1/3}$-equally sized partition $V_1, \dots, V_{n^{1/3}}$ such that for every $ i \in [n^{1/3}]$, $\sum_{u \in V_i} d_{in}(u)+ d_{out}(u) \leq 2\alpha$, proven to exist by Claim~\ref{combi:partitioning}. Denote $v=v_{i',j'}$. \label{step3}
	
	\ForEach{$j \in [n^{1/3}]$}{ \label{step4}
		\ForEach{$k \in [n^{2/3}]$}{ \label{step5}
			\Send $|E(v, V_j)|$ to $v_{i', k}$ \label{step6}
		}
	}
	
	\ForEach{$j \in [n^{1/3}]$}{ \label{step7}
		Compute $m_{i', j} = |E (V_{i'} , V_{j})| = \sum_{u \in V_{i'}}|E (u, V_j)|$ \label{step8}
		
		Nodes in $V_{i'}$ partition $V_{i'}$ into $N_{i', j, 1}, \dots, N_{i', j, \ceil{m_{i', j}/\beta}}$ such that for every $\ell \in [\ceil{m_{i', j}/\beta}]$, $|E (N_{i', j, \ell} , V_j)| \leq \beta$,  proven to exist by Claim~\ref{combi:partitioning}. \label{step9}
	}
	
	\ForEach{$i \in [n^{1/3}]$}{ \label{step10}
		\Send $\ceil{m_{i', 1}/\beta}, \dots, \ceil{m_{i', n^{1/3}}/\beta}$ to $v_{i, j'}$ \label{step11}
	}
	
	All nodes partition the set of all $N_{i, j, \ell}$ sets into two equally sized sets $A^1, A^2$ \label{step12}
	
	\ForEach{$t \in \{1, 2\}$}{ \label{step13}
		For every $k \in [n^{2/3}]$, uniquely assign at most one $N_{i, j, \ell} \in A^t$ to $D_k$, denoted by $N_{i_k, j_k, \ell_k}$ \label{step14}
		
		\LearnEdges $E (N_{i_\delta, j_\delta, \ell_\delta} , V_{j_\delta})$ \label{step15}
		
		\ForEach{$k \in [n^{2/3}]$}{ \label{step16}
			\ForEach{$u \in D_k$}{ \label{step17}
				\Send $|E  (V_{j_k} , v)| + |E  ( v , V_{i_k})|$ to $u$ \label{step18}
			}
		}
		
		Nodes in $D_\delta$ partition $V$ into $P_1, \dots P_{n^{1/3}}$ such that for every $d \in [n^{1/3}]$, $|E  (V_{j_\delta}, P_d)| + |E  (P_d , V_{i_\delta} )| \leq  4\beta$, proven to exist by Claim~\ref{combi:partitioning}. \label{step19}
		
		\LearnPaths$E  (V_{j_\delta} , P_{\delta'}) \cup E (P_{\delta'} , V_{i_\delta} )$ \label{step20}
		
		$v$ outputs a list of any triangles it sees \label{step21}
	}
	
	\algrule[1pt]
	\setcounter{AlgoLine}{0}
	\nonl
	\LearnEdges (Step \ref{step15}):
	
	Arbitrarily order the outgoing edges of $v$: $e_1 = (v, u_1), \dots, e_{d_{out}(v)} = (v, u_{d_{out}(v)})$ \label{step2.1}
	
	For every $ i \in [d_{out}(v)]$, create an \emph{information packet} $p_i = \{e_i, i', j, \ell, k\}$, where $j$ is such that $u_i \in V_j$, $\ell$ is such that $v \in N_{i', j, \ell}$, and $k$ is such that $N_{i', j, \ell}$ is assigned to $D_k$. \label{step2.2}
	
	Uniquely \emph{allocate} all packets such that each node gets at most $m/n + 1$ packets, proven to exist by Claim~\ref{combi:SimpleByAvg}. \label{step2.3}
	
	\ForEach{$i \in [d_{out}(v)]$}{ \label{step2.4}
		\Send $p_i$ to the node to which $p_i$ is allocated \label{stepSendPackets} \label{step2.5}
	}
	
	\ForEach{packet $p = \{e, i, j, \ell, k\}$ received in Step \ref{stepSendPackets}}{ \label{step2.6}
		
		\ForEach{$u \in D_k$}{ \label{step2.7}
			\Send $e$ to $u$ \label{step2.8}
		}
		
	}
	
	\algrule[1pt]
	\setcounter{AlgoLine}{0}
	\nonl
	\LearnPaths (Step \ref{step20}):
	
	Execute Compute-Sending using $S = T = A$, $a = b = n^{1/3}$
	
	Execute Resolve-Routing using $S = T = A$, $a = b = n^{1/3}, A_{i,j,k} = P_{\delta'}$

	\caption{Code for node $v \in \{1,\dots,n\}$, which is also denoted $d_{\delta, \delta'}$.}
	\label{procedure:triangleListing}
\end{algorithm}

\textbf{Algorithm Overview. } Let $E^3$ be the set of all ordered triplets of edges in $G$,  and let $Tri \subseteq E^3$ be the set of all directed triangles in $G$. Our goal is to distribute (perhaps multiple copies of) parts of $E$ across the nodes of the graph such that each $(e_1, e_2, e_3) \in Tri$ is known to at least one node. For intuition, observe the following simple algorithm for triangle listing, which also appears in~\cite{DolevLP12}: (1) arbitrarily partition the nodes of the graph into an $n^{1/3}$-equally sized partition $\{V_1, \dots, V_{n^{1/3}}\}$, (2) assign every ordered triplet of sets $(V_i, V_j, V_k)$ from the partition to a different node and have each node learn $E(V_i, V_j) \cup E(V_j, V_k) \cup E(V_k, V_i)$. Clearly, any triangle will be found by some node. 
This simple approach is costly due to the \emph{load imbalance} which it suffers from - for different triplets $(V_i, V_j, V_k)$, the number of edges in $E(V_i, V_j) \cup E(V_j, V_k) \cup E(V_k, V_i)$ may vary drastically, with some triplets being very dense while others are sparse.

Our approach is to utilize load balancing routing strategies from Algorithm \alg, in order to create a \emph{layered} partition of $V$ that ensures that every node learns roughly the same amount of edges. In more detail, instead of having a single node in charge of every triplet $(V_i,V_j,V_k)$, we take another (fixed) partition $\{D_1, \dots, D_{n^{2/3}}\}$ of $V$ which is an $n^{2/3}$-equally sized partition, and assign each \emph{pair} $(V_i,V_j)$ to a set of $n^{1/3}$ nodes in some $D_k$.
Then, within each $D_k$, the nodes again partition $V$ into $\{P_1, \dots, P_{n^{1/3}}\}$ in order to take into account the distribution of edges between the graph and the pair $(V_i, V_j)$ that is assigned to $D_k$. Here too, we get that all triangles in the graph are guaranteed to be listed. For the cost of communication to be very efficient, we choose the two non-fixed partitions carefully, according to information that the nodes exchange regarding amounts of edges between different sets in the graph.

Thus, Algorithm \ref{procedure:triangleListing} works as follows. We begin by partitioning $V$ into an $n^{1/3}$-equally sized partition $\{V_1, \dots, V_{n^{1/3}}\}$ of $V$ that has the property that for every $ i \in [n^{1/3}]$, $\sum_{u \in V_i} d_{in}(u)+ d_{out}(u) \leq 2\alpha$. Then, we take every pair of sets in the partition, $(V_i, V_j)$, and split the set $E(V_i, V_j)$ into smaller sets by creating $N_{i,j,\ell}$ sets such that $|E (N_{i, j, \ell} , V_j)| \leq \beta$, and each such $N_{i, j, \ell}$ is assigned to a predefined fixed set $D_k$ of $n^{1/3}$ unique nodes. Next, the nodes within $D_k$ compute a partition $\{P_1, \dots, P_{n^{1/3}}\}$ of $V$ such that for every $d \in [n^{1/3}]$, $|E  (V_{j}, P_d)| + |E  (P_d , V_{i} )| \leq  4\beta$. Finally, every node within $D_k$ is assigned one $P_d$ and learns all the edges in $E (N_{i, j, \ell} , V_{j}) \cup E  (V_{j} , P_{d}) \cup E (P_{d} , V_{i} )$.

The pseudo-code of the algorithm is described in Algorithm \ref{procedure:triangleListing}.

\begin{proofof}{Theorem~\ref{theorem:triangleListing}.}

	We need to show that every $(e_1, e_2, e_3) \in Tri$ is known to at least one node. Let $(v_1, v_2, v_3, v_1)$ be a directed triangle, with the edges $e_1 = (v_1, v_2), e_2 = (v_2, v_3), e_3 = (v_3, v_1)$. Observing the partition defined in Step \ref{step3}, let $i, j$ be the unique integers such that $v_1 \in V_i, v_2 \in V_j$. Notice that due to the loop in Steps \ref{step7}-\ref{step9}, there also exists an integer $\ell$ such that $v_1 \in N_{i, j, \ell}$ and that $N_{i, j, \ell} \subseteq V_i$. Further notice that in Step \ref{step14}, there is some unique integer $k$ such that the set $N_{i,j,\ell}$ is assigned to $D_k$. Next, in Step \ref{step19}, the nodes in $D_k$ agree on some partition of $V$ into $\{P_1, \dots, P_{n^{1/3}}\}$. Let $d$ be the unique integer such that $v_3 \in P_d$ for the specific $\{P_1, \dots, P_{n^{1/3}}\}$ partition agreed to by $D_k$. Observe the node $d_{k, d}$: in Step \ref{step15}, all the nodes in $D_k$, including $d_{k,d}$, learn all the edges $E(N_{i,j,\ell}, V_j)$; in Step \ref{step20}, $d_{k,d}$ learn all the edges $E(V_j, P_d) \cup E(P_d, V_i)$. Therefore, node $d_{k,d}$ is aware of the edges $e_1, e_2, e_3$ and so in Step \ref{step21} it outputs this triangle.
	
	For the number of rounds, it follows by Lenzen's routing scheme that all the steps of Algorithm \ref{procedure:triangleListing}, with the exception of the \LearnEdges and \LearnPaths instructions in Steps \ref{step15} and \ref{step20}, can be executed in $O(1)$ rounds in the \clique model. For \LearnEdges and \LearnPaths, a code inspection gives that no node sends or receives more than $O(\beta)$ messages, guaranteeing a total round complexity of $O(m/n^{5/3} + 1)$.
\end{proofof}

	\section{Discussion}
\label{sec:discussion}
	This work significantly improves upon the round complexity of multiplying two matrices in the distributed \clique model, for input matrices which are sparse. As mentioned, we are unaware of a similar algorithmic technique being utilized in the literature of parallel computing, which suggests that our approach may be of interest in a more general setting. The central ensuing open question left for future reserach is whether the round complexity of sparse matrix multiplication in the \clique can be further improved.
	
	Finally, an intriguing question is the complexity of various problems in the more general $k$-machine model ~\cite{KlauckNP015,Pandurangan0S16}, where the size of the computation clique is $k<<n$. The way of partitioning the data to the nodes is of importance. One may assume that the input to each node consists of $n/k$ unique consecutive rows of $S$ and $T$, and its output should be the corresponding $n/k$ rows of the product $P = S \cdot T$. Applying our algorithm in this setting gives a round complexity of $O(\min_{n\text{-split pairs } (a,b)} n^2/k^2 + nz(S)\cdot b/k^2+nz(T) \cdot a/k^2+n^2/kab + 1)$ rounds, which is $O(n^{2/3}\cdot nz(S)^{1/3} nz(T)^{1/3}/k^{5/3} + 1)$ rounds with the assignment $a = n^{2/3} k^{1/3} \cdot nz(S)^{1/3} / nz(T)^{2/3}$ and $b = n^{2/3} k^{1/3} \cdot nz(T)^{1/3} / nz(S)^{2/3}$. To see why, consider each node as simulating the behavior of $n/k$ \emph{virtual} nodes of the \clique model that belong to the same $N_{i,j}$ set. The round complexity of all steps of the algorithm grows by a multiplicative factor of $n^2/k^2$, apart from the steps in Algorithm~\ref{procedure:ASBMM} which grow only by a multiplicative factor of $n/k$, since part of the simulated communication consists of messages sent between virtual nodes that are simulated by the same actual node, and as such do not require actual communication. We ask whether this complexity can be improved for $k<<n$.
		
	\textbf{Acknowledgements:} The authors thank Seri Khoury, Christoph Lenzen, and Merav Parter for many useful discussions and suggestions. This project has received funding from the European Union's Horizon 2020 Research And Innovation Programe under grant agreement no.~755839. Supported in part by ISF grant 1696/14.
\bibliography{SparseMM}


\end{document}